\newtheorem*{thm}{Theorem}
\newtheorem{prop}{Proposition}
\newtheorem{lem}{Lemma}
\newtheorem{rem}{Remark}
\begin{document}
\title{Interaction Decomposition of a Prediction Function}
\author{Hirokazu Iwasawa, Yoshihiro Matsumori}
\date{}
\maketitle

\begin{abstract}
This paper discusses the foundation of methods for accurately grasping interaction effects. The partial dependence (PD) and accumulated local effects (ALE) methods, which capture interaction effects as terms, are known as global model-agnostic methods in the interpretable machine learning field. ALE provides a functional decomposition of the prediction function. In the present study, we propose and mathematically formalize the requirements of an interaction decomposition (ID) that decomposes a prediction function into its main and interaction effect terms. We also present a theorem by which a decomposition method meeting these requirements can be generated. Furthermore, we confirm that ALE is an ID but PD is not. Finally, we present examples of decomposition methods that meet the requirements of ID, using both existing methods and methods that differ from the existing ones.
\end{abstract}

Keywords: functional decomposition, interaction, interpretable machine learning, accumulated local effects, partial dependence

\section{Introduction}\label{section_introduction}
This paper mathematically discusses the foundation of methods that decompose prediction functions into their main and interaction effect terms. Two existing methods that capture interaction effects as terms are partial dependence \cite{friedman2001greedy} (PD) and accumulated local effects \cite{apley2020visualizing} (ALE). These global model-agnostic methods are well-known in the interpretable machine learning field \cite{adadi2018peeking, lorentzen2020peeking, baeder2021interpretable, molnar2020interpretable} (IML). This paper focuses on an approach that extracts interaction terms through functional decomposition of the prediction function, a theoretical approach used in ALE.

ALE was proposed by Apley et al. \cite{apley2020visualizing}, who pointed out that ALE provides a functional decomposition with excellent decomposition properties (as detailed in Appendix C of their online supplementary materials \cite{apley2016visualizing}). Their Appendix C apparently treats the decomposition properties as unique properties of ALE. However, the present study proposes that these properties are not unique to ALE, but must be satisfied by any functional decomposition method to accurately grasp the interaction effects.

In this paper, we enumerate these properties as well as some additional properties and mathematically formulate specific requirements for functional decomposition methods. We call a functional decomposition method that satisfies the requirements an interaction decomposition (ID). We also provide a theorem by which we can create IDs. Using this theorem, we confirm that ALE is an ID but PD is not. Additionally, we present further examples that meet the requirements of ID, illustrating the versatility of the theorem beyond the existing methods.

The remainder of this paper is structured as follows. Section \ref{section_formulation} mathematically formalizes the ID requirements. Section \ref{section_theorem} presents the theorem and the main proposition of this paper. Section \ref{section_example} uses the theorem to check whether the existing methods meet the ID requirements and provides examples of new methods that meet the ID requirements. Section \ref{section_proof} provides a proof of the main proposition implying the theorem and Section \ref{section_conclusion} concludes the paper.

\section{Formulation}\label{section_formulation}
In this section, we propose and mathematically formalize the requirements that must always be met when decomposing a prediction function into its main and interaction effect terms.
\subsection{Notations and assumptions}\label{subsection_notation&assumptions}
Consider the following functional decomposition of a prediction function $f$:
\[f\left( \mathbf{x}_D\right) =\sum_{J\subseteq D}f_J(\mathbf{x}_J)=f_\emptyset +\sum_{j\in D}f_j(x_j)+\sum_{\{ j,l\} \subseteq D}f_{\{ j,l\} }(x_j,x_l)+\cdots +f_D\left( \mathbf{x}_D\right) .\]

Here we explain the notations of this paper, including the terms in the above formula. This discussion will follow \cite{apley2020visualizing} as far as possible to facilitate comparisons with previous research. In addition, we will frequently itemize our notations for easy reference.
\subsubsection{Basic Notations}\label{subsubsection_notation}
\begin{itemize}
    \item Let $x_1, \ldots, x_d$ be the feature variables. Feature variables treated as random variables are capitalized.
    \item The prediction function $f$ is a function of $d$ variables (in practical applications, $f$ can be replaced with $h \circ f$, where $h$  is a link function.)
    \item The set of variable indices is defined as $D := \{1, \ldots, d\}$.
    \item For a nonempty subset $J \subseteq D$, we define $\mathbf{x}_J := \{x_j | j \in J\}$. In particular, $\mathbf{x}_D := \{x_1, \ldots, x_d\}$. The elements of these sets are variables rather than values. Notations such as $\mathbf{x}_J = \mathbf{c}$ denote that the components of $\mathbf{c}$ are substituted into each variable $x_j$, where $j \in J$.
    \item Define $\setminus J := D \setminus J$ as a set difference; for example,
    \[\mathbf{x}_{\setminus \{j, l\}} = \mathbf{x}_D \setminus \mathbf{x}_{\{j, l\}} = \{x_1, \ldots, x_{j-1}, x_{j+1}, \ldots, x_{l-1}, x_{l+1}, \ldots, x_d\}.\]
    In particular, define $\setminus j := D \setminus \{j\}$. For example, 
    \[\mathbf{x}_{\setminus j} = \mathbf{x}_{\setminus \{j\}} = \{x_1, \ldots, x_{j-1}, x_{j+1}, \ldots, x_d\}.\]
    \item Let $V_J$ represent the space of functions with variables in $\mathbf{x}_J$ and define $V_\emptyset := \mathbb{R}$. If $J' \subseteq J$, then $V_{J'} \subseteq V_J$.
    \item Define $\frac{\partial}{\partial \mathbf{x}_J} := \frac{\partial^{|J|}}{\partial x_{j_1}\cdots\partial x_{j_{|J|}}}$. For example, $\frac{\partial}{\partial \mathbf{x}_{\{j, l\}}} = \frac{\partial^2}{\partial x_j\partial x_l}$.
    \item The function obtained by substituting $\mathbf{c}$ into the variables $\mathbf{x}_J$ of a function $g$ is denoted as $g|_{\mathbf{x}_J=\mathbf{c}}$, where | is the so-called ``evaluated at'' symbol.
    \item Define the partial difference operator ${}_j\Delta_{a_j}: V_D \rightarrow V_D$ as ${}_j\Delta_{a_j}(g) = g - g|_{x_j = a_j}$. For $J = \{j_1, \ldots, j_{|J|}\}$ and $\mathbf{a}_J = (a_{j_1}, \ldots, a_{j_{|J|}})$, define ${}_J\Delta_{\mathbf{a}_J} := {}_{j_1}\Delta_{a_{j_1}} \circ \cdots \circ {}_{j_{|J|}}\Delta_{a_{j_{|J|}}}$. It follows that ${}_J\Delta_{\mathbf{a}_J}(g) = \sum_{J' \subseteq J} (-1)^{|J'|}g|_{\mathbf{x}_{J'} = \mathbf{a}_{J'}}$.
\end{itemize}

\subsubsection{Meaning of the Notation $\frac{\partial g}{\partial \mathbf{x}_J}=0$}\label{subsubsection_notation_partial}
In this paper, the notation $\frac{\partial g}{\partial \mathbf{x}_J}=0$ is extended to cases of discontinuous $x_j$. Specifically, when stating $\frac{\partial g}{\partial \mathbf{x}_J}=0$, we mean that ``${}_J \Delta _{\mathbf{a}_J}\left( g\right) =0$ for any value $\mathbf{a}_J$ that $\mathbf{x}_J$ can take.'' For example, if $\frac{\partial g}{\partial x_j}=0$ (regardless of whether $g$ is differentiable with respect to $x_j$ in the usual sense), then $g-\left( g|_{x_j=a_j}\right) =0$ (identically $0$) for any value $a_j$ that $x_j$ can take. In other words, if $g\in V_D$, then $g\left( x_1,\ldots,x_{j-1},x_j,x_{j+1},\ldots,x_d\right) -g\left( x_1,\ldots,x_{j-1},a_j,x_{j+1},\ldots,x_d\right) =0$. Similarly, if $\frac{\partial^2g}{\partial x_j\partial x_l}=0$ (regardless of whether $g$ is differentiable), then $g-\left( g|_{x_j=a_j}\right) -\left( g|_{x_l=a_l}\right) +\left( g|_{x_j=a_j, x_l=a_l}\right) =0$ for any values $a_j$ and $a_l$ that $x_j$ and $x_l$ can take, respectively.
	
\begin{rem}\label{rem_partial}
From the definition of the partial difference operator, it follows that $\frac{\partial g}{\partial \mathbf{x}_J}=0$ holds if and only if $g$ can be decomposed into a sum of functions belonging to a set $\{g_{J^{\prime}\cup\setminus J} \}_{J^{\prime}\subset J}$, where $g_{J^{\prime}\cup\setminus J}\in V_{J^{\prime}\cup\setminus J}$. In particular, if $\frac{\partial g_J}{\partial \mathbf{x}_J}=0$ for some $g_J\in V_J$, then $g_J$ can be decomposed into a sum of functions belonging to a set $\{g_{J^{\prime}} \}_{J^{\prime}\subset J}$, where $g_{J^{\prime}}\in V_{J^{\prime}}$.
\end{rem}

\subsubsection{Notation of Functional Decomposition}\label{subsubsection_notation_FD}
As mentioned at the beginning of Section \ref{subsection_notation&assumptions}, a prediction function $f$ is decomposed as follows: 
\[f\left( \mathbf{x}_D\right) =\sum_{J\subseteq D}{f_J(\mathbf{x}_J)}=f_\emptyset+\sum_{j\in D}{f_j(x_j)}+\sum_{\{j,l\}\subseteq D}{f_{\{ j,l\} }(x_j,x_l)}+\ldots+f_D\left( \mathbf{x}_D\right),\]
where
\begin{itemize}
    \item $f_\emptyset\in\mathbb{R}$ represents the zero-order effect.
    \item $f_j\left(x_j\right)$ represents the first-order (main) effect of $x_j$.
    \item $f_{\{ j,l\} }(x_j,x_l)$ represents the second-order interaction effect of $\{ x_j,x_l\}$.
    \item $f_J(\mathbf{x}_J)$ represents the $|J|$th-order interaction effect of $\mathbf{x}_J$.
    \item A functional decomposition must specify each of these terms. That is, a functional decomposition gives a set of functions $\left\{f_J\right\}_{J\subseteq D}$ with $2^d$ elements. The set $\left\{f_J\right\}_{J\subseteq D}$ is also referred to as a functional decomposition.
\end{itemize}

We now propose a methodology that specifies a concrete functional decomposition by a set $\left\{ \mathcal{H}_J\right\} _{J\subseteq D}$ of linear operators  $\mathcal{H}_J:V_D\rightarrow V_J$. A functional decomposition method is defined by specifying a set of linear operators $\left\{ \mathcal{H}_J\right\} _{J\subseteq D}$ that defines the functional decomposition as $\left\{ f_J:=\mathcal{H}_J (f)\right\} _{J\subseteq D}$. In specifying $\left\{ \mathcal{H}_J\right\} _{J\subseteq D}$, we adopt the following notations:
\begin{itemize}	
    \item $I:V_D\rightarrow V_D$ is the identity operator.
    \item The operator $\mathbb{E}:V_D\rightarrow\mathbb{R}$ for $g\in V_D$ is defined as $\mathbb{E}\left( g\right):=\mathbb{E}[g(\mathbf{X}_D)]$. (In the main examples discussed later, $\mathcal{H}_{\emptyset}=\mathbb{E}$ always holds and must be satisfied by all IDs).
 \item The operator $\mathbb{E}_{\mathbf{X}_{\setminus J}}:V_D\rightarrow V_J$ for $g\in V_D$ is defined as $\mathbb{E}_{\mathbf{X}_{\setminus J}}\left( g\right) \left( \mathbf{x}_J\right) :=\mathbb{E}[g(\mathbf{x}_J,\mathbf{X}_{\setminus J})]$, where $\mathbb{E}$ on the right-hand side is the usual expectation symbol. (Previewing an example shown later, $\mathrm{PD}_J (\mathbf{x}_J ):=\mathbb{E}[f\left(\mathbf{x}_J,\mathbf{X}_{\setminus J}\right) ]$, so we can write $\mathrm{PD}_J = \mathbb{E}_{\mathbf{X}_{\setminus J}} (f)$).
       \item We denote by $\circ$ a composition operator that combines operators. As mentioned earlier, $V_{J^{\prime}}\subseteq V_J$ whenever $J^{\prime}\subseteq J$. Therefore, $V_J\subseteq V_D$ holds for any $J\subseteq D$. It is important to note that, from these facts, $\circ$ can compose the operators between any elements of $\left\{\mathcal{H}_J\right\}_{J\subseteq D}$.
\end{itemize}

\subsubsection{Assumptions}\label{subsubsection_assumptions}
The assumptions of our study are listed below:
\begin{itemize}
    \item All distributions of $X_1,\ldots,X_d$ have compact support.
    \item For any $J\subseteq D$, any function $g\in V_J$ is bounded.
\end{itemize}

\subsection{Description and Requirements of Interaction Decomposition}\label{subsection_ID requirements}
We propose that Properties (P1) through (P6) below are the requirements of a set of linear operators defining a functional decomposition of a prediction function into its main terms and interaction effect terms. A functional decomposition method that satisfies all of (P1)--(P6) is defined as an interaction decomposition, ID for short, and its requirements are called ID requirements.

\begin{itemize}
\setlength{\leftskip}{15pt}
    \item[(P1)] Unbiasedness: $\mathcal{H}_\emptyset=\mathbb{E}$ and $\mathbb{E}\circ \mathcal{H}_J=0$ for $J\neq\emptyset$.
    \item[(P2)] Relevance: If $\frac{\partial g}{\partial \mathbf{x}_J}=0$, then $\mathcal{H}_J\left( g\right) =0$.
    \item[(P3)] Lean Decomposability: If $g\in V_J$, then $\left( \sum_{J^{\prime}\subseteq J}\mathcal{H}_{J^{\prime}}\right) \left( g\right) =g$.
    \item[(P4)] Idempotence: $\mathcal{H}_J\circ \mathcal{H}_J=\mathcal{H}_J$.
    \item[(P5)] Operational Orthogonality: If $J^{\prime}\neq J$, then $\mathcal{H}_{J^{\prime}}\circ \mathcal{H}_J=0$.
    \item[(P6)] Consistency with PD: When all feature variables are independent, $\left( \sum_{J^{\prime}\subseteq J}\mathcal{H}_{J^{\prime}}\right) =\mathbb{E}_{X_{\setminus J}}$.
\end{itemize}

The Unbiasedness property (P1) means that all terms except for the zero-order effect are unbiased (i.e., zero on average). From a decomposition perspective, it is essential to understand whether the relative effect of each term is positive or negative with respect to the values of the feature variables. To this end, we can naturally set each effect to zero on average. One could argue that this requirement is not necessary but, if a decomposition method satisfies all other requirements, it is always possible to make it additionally satisfy this requirement. Therefore, even if it is not considered as necessary, it can be interpreted as an essential rule that avoids unnecessary arbitrariness. Note that if the decomposition method satisfies (P5), the condition after ``and'' in (P1) is redundant.

The Relevance property (P2) means that a function composed only of invariant (or, in a sense, irrelevant) terms to one or more elements of $\mathbf{x}_J$ has no terms representing the effect of $\mathbf{x}_J$. From a decomposition perspective, parts irrelevant to some elements of $\mathbf{x}_J$ should be represented without including an effect term of $\mathbf{x}_J$. 

The Lean Decomposability property (P3) means that if a prediction function $f$ depends solely on $\mathbf{x}_J$, then $f$ must be sufficiently decomposable only with terms belonging to $\left\{f_{J^{\prime}}\ |\ J^{\prime}\subseteq J\right\}$. From a decomposition perspective, we can naturally require that the original function can be reconstructed by summing all terms ($\left( \sum_{J\subseteq D}\mathcal{H}_J\right) \left(f\right) =f$; see Proposition \ref{prop_decomposability}). However, requiring only the decomposability in this sense is insufficient because even any set $\left\{ \mathcal{H}_J\right\} _{J\subseteq D}$ with irrelevant operators $\mathcal{H}_J$ for $ J\subset D $ satisfies it by only defining $\mathcal{H}_D\left(f\right)$ as $\mathcal{H}_D\left(f\right):=f-(\sum _{J\subset D}\mathcal{H}_J )(f)$. Therefore, we require that $f$ must be decomposable only with indispensable terms to reconstruct $f$.

The Idempotence property (P4) and Operational Orthogonality property (P5) mean that if the effect term $f_J$ of $\mathbf{x}_J$ extracted from $f$ by decomposition is itself decomposed, it remains as $f_J$. In other words, the effect term of $\mathbf{x}_J$ extracted from $f_J$ is exactly $f_J$ (Idempotency), and extracting the effect terms of others than $\mathbf{x}_J$ from $f_J$ gives zero (Operational Orthogonality). These requirements are natural because the terms obtained through decomposition are expected to be indivisible; therefore, they should remain intact after further decomposition attempts. Note that if (P2) is satisfied, then (P5) can be replaced with ``If $J^{\prime}\subset J$, then $\mathcal{H}_{J^{\prime}}\circ \mathcal{H}_J=0$'' because if $J^{\prime}\subseteq J$ does not hold, then $\frac{\partial\mathcal{H}_J\left( g\right)}{\partial \mathbf{x}_{J^{\prime}}}=0$, leading to $\mathcal{H}_{J^{\prime}}\circ \mathcal{H}_J=0$ according to (P2). Moreover, given (P5) and (P3), (P4) is derived as
\[
\mathcal{H}_J\circ \mathcal{H}_J\left( g\right) =\left( \sum_{J^{\prime}\subseteq J}\mathcal{H}_{J^{\prime}}\right) \circ \mathcal{H}_J\left( g\right) = \left( \sum_{J^{\prime}\subseteq J}\mathcal{H}_{J^{\prime}}\right) \left( \mathcal{H}_J\left( g\right)\right) =\mathcal{H}_J\left( g\right). 
\]
In this sense, when (P3) holds, (P4) is mathematically redundant under (P5).

The property of Consistency with PD (P6) means that when all feature variables are independent, the result is consistent with a decomposition based on PD (defined later). This requirement might be somewhat stronger than the other requirements but is necessary for excluding senseless functional decompositions such as
\[
f_J\left( \mathbf{x}_J\right):=f|_{\mathbf{x}_{\setminus J}=0} (\mathbf{x}_J )-\mathbb{E}\left[ f|_{\mathbf{x}_{\setminus J}=0} (\mathbf{X}_J )\right].
\]
Nevertheless, the necessity of such a strong requirement is debatable. In the following discussion, this requirement is treated somewhat separately from the other requirements. 

Note that some of the above-discussed ID requirements have been also identified as ALE properties \cite{apley2020visualizing}. For instance, Apley et al. \cite{apley2020visualizing} postulate that ALE satisfies a property corresponding to (P1) (Unbiasedness). They do not mention properties corresponding to (P2) (Relevance) nor (P3) (Lean Decomposability), but postulate that ALE satisfies what they call the additive recovery property, which is the same as Proposition \ref{prop_centering} in the present paper. They also discuss the importance of (P4) (Idempotence) and (P5) (Operational Orthogonality) in ALE under the name of ``a certain orthogonality-like property.'' They do not mention any property corresponding to (P6) (Consistency with PD).

\subsection{Examples of Necessary Conditions for ID}\label{subsection_necessary_condition}
The following propositions are deduced from the ID requirements and are necessary for an ID:
\begin{prop}\label{prop_decomposability}
$\left( \sum_{J\subseteq D}\mathcal{H}_J\right) \left( f\right) =f.$
\end{prop}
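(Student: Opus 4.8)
The plan is to obtain this as the special case $J = D$ of the Lean Decomposability requirement (P3), so no machinery beyond (P3) is needed. First I would record the one fact that makes the specialization legitimate: the prediction function $f$ is a function of all $d$ feature variables $x_1,\ldots,x_d$, so by the definition of the function space $V_D$ we have $f \in V_D$. This membership is the sole hypothesis required to invoke (P3) at $J = D$.

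Next I would write out (P3) for the choice $J = D$. It asserts that for every $g \in V_D$,
\[
\left( \sum_{J^{\prime}\subseteq D}\mathcal{H}_{J^{\prime}}\right)\left( g\right) = g.
\]
The index set $\{J^{\prime} : J^{\prime}\subseteq D\}$ occurring here is precisely the full family $\{J : J\subseteq D\}$ that indexes the sum in the statement of the proposition. Hence, after relabeling $J^{\prime}$ as $J$ and substituting $g = f$ (which is permissible since $f\in V_D$), the desired identity $\left(\sum_{J\subseteq D}\mathcal{H}_J\right)(f) = f$ follows immediately.

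I expect no genuine obstacle here: the entire content of the proposition is absorbed into (P3), and the argument is essentially a one-line specialization. The only points worth verifying are bookkeeping ones, namely that each $\mathcal{H}_{J}$ may indeed be applied to $f$ (guaranteed by $f\in V_D$ together with $\mathcal{H}_J : V_D \rightarrow V_J$), and that the sum of the outputs $\mathcal{H}_J(f)\in V_J\subseteq V_D$ is well-defined inside $V_D$ by the nesting $V_J \subseteq V_D$ stated in the notation. One could alternatively attempt a self-contained derivation through the partial difference operators ${}_J\Delta_{\mathbf{a}_J}$ and Remark \ref{rem_partial}, but such a route would amount to reproving (P3) at $J = D$ from scratch and is therefore unnecessary once (P3) is taken as given.
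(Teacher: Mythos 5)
Your proof is correct and is exactly the paper's argument: the paper likewise deduces this proposition immediately from (P3) by specializing to $J = D$, using that $f \in V_D$. Nothing further is needed.
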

This proposition, which expresses a decomposability property, can be immediately deduced from (P3).

\begin{prop}\label{prop_centering}
When $f(\mathbf{x}_D)=\sum_{j=1}^{d}g_j(x_j)$, then $\mathcal{H}_j\left (f\right) \left( x_j\right) =g_j\left( x_j\right) -\mathbb{E}\left[ g_j\left( X_j\right) \right]$. In particular, when $f\left( \mathbf{x}_D\right) =\beta_0+\sum_{j=1}^{d}\beta_jx_j$ (a first-degree polynomial), then $\mathcal{H}_j\left( f\right) \left( x_j\right) =\beta_jx_j-\beta_j\mathbb{E}\left[ X_j\right]$.
\end{prop}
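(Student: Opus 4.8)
The plan is to exploit the linearity of each $\mathcal{H}_j$ together with three of the ID requirements: Relevance (P2) to discard the contributions of the ``wrong'' variables, and Lean Decomposability (P3) combined with Unbiasedness (P1) to evaluate the one surviving term. Throughout I write $f=\sum_{k=1}^{d}g_k(x_k)$ with each $g_k\in V_{\{k\}}$.

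First I would split $\mathcal{H}_j(f)$ by linearity as $\mathcal{H}_j(f)=\sum_{k=1}^{d}\mathcal{H}_j(g_k)$. For every $k\neq j$, the function $g_k$ does not involve $x_j$, so in the extended sense of Section \ref{subsubsection_notation_partial} we have $\frac{\partial g_k}{\partial x_j}=0$ (indeed $g_k-g_k|_{x_j=a_j}=0$ identically for every admissible $a_j$). Relevance (P2) applied with $J=\{j\}$ then gives $\mathcal{H}_j(g_k)=0$ for all $k\neq j$, so that $\mathcal{H}_j(f)=\mathcal{H}_j(g_j)$.

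Next I would evaluate $\mathcal{H}_j(g_j)$. Since $g_j\in V_{\{j\}}$, Lean Decomposability (P3) with $J=\{j\}$ yields $\bigl(\mathcal{H}_\emptyset+\mathcal{H}_j\bigr)(g_j)=g_j$, because the only subsets of $\{j\}$ are $\emptyset$ and $\{j\}$. By Unbiasedness (P1), $\mathcal{H}_\emptyset=\mathbb{E}$, and since $g_j$ depends only on $x_j$ we have $\mathcal{H}_\emptyset(g_j)=\mathbb{E}[g_j(\mathbf{X}_D)]=\mathbb{E}[g_j(X_j)]$. Rearranging gives $\mathcal{H}_j(g_j)=g_j-\mathbb{E}[g_j(X_j)]$, and combined with the previous step this establishes $\mathcal{H}_j(f)(x_j)=g_j(x_j)-\mathbb{E}[g_j(X_j)]$. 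For the first-degree polynomial case $f(\mathbf{x}_D)=\beta_0+\sum_{j=1}^{d}\beta_jx_j$, I would note that the constant $\beta_0\in V_\emptyset$ also satisfies $\frac{\partial \beta_0}{\partial x_j}=0$, so (P2) gives $\mathcal{H}_j(\beta_0)=0$ and the additive constant is irrelevant to $\mathcal{H}_j$; taking $g_j(x_j)=\beta_jx_j$ in the general formula then yields $\mathcal{H}_j(f)(x_j)=\beta_jx_j-\mathbb{E}[\beta_jX_j]=\beta_jx_j-\beta_j\mathbb{E}[X_j]$.

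There is no serious obstacle here; the argument is essentially bookkeeping with the axioms and, notably, requires no independence assumption. The only points needing slight care are verifying that the extended partial-difference hypothesis of (P2) genuinely holds for a summand omitting a variable, and that the operator $\mathbb{E}$ (full expectation over $\mathbf{X}_D$) coincides with the marginal $\mathbb{E}[g_j(X_j)]$ when the integrand depends on $x_j$ alone, which is just marginalization. Both are immediate from the definition of the partial difference operator in Section \ref{subsubsection_notation} and of $\mathbb{E}$ in Section \ref{subsubsection_notation_FD}.
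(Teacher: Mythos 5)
Your proof is correct and follows exactly the route the paper indicates: it is the natural fleshing-out of the paper's one-line justification that Proposition \ref{prop_centering} ``is deduced from the linearity of the operators and (P1), (P2), and (P3)'' — linearity to split the sum, (P2) to kill the terms $g_k$ with $k\neq j$, and (P3) with (P1) to evaluate $\mathcal{H}_j(g_j)$. No gaps; the handling of the constant $\beta_0$ via (P2) and the marginalization $\mathbb{E}[g_j(\mathbf{X}_D)]=\mathbb{E}[g_j(X_j)]$ are both sound.
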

This proposition is deduced from the linearity of the operators and (P1), (P2), and (P3).

As a specific example, when $f\left( \mathbf{x}_D\right) =x_1+x_2$, then $\mathcal{H}_1\left( f\right) \left( x_1\right) =x_1-\mathbb{E}\left[ X_1\right]$, $\mathcal{H}_2\left( f\right) \left( x_2\right) =x_2-\mathbb{E}\left[ X_2\right]$, and $\mathcal{H}_J\left( f\right) =0$ for $J\notin \left\{ \emptyset,\left\{ 1\right\}, \left\{ 2\right\} \right\}$. As shown later, generalized functional ANOVA does not guarantee this proposition for $f\left( \mathbf{x}_D\right) =x_1+x_2$,  so the method is not an ID.

\begin{prop}
If $\frac{\partial}{\partial \mathbf{x}_J}\left( \mathcal{H}_J\left( g\right) \right) =0$, then $\mathcal{H}_J\left( g\right) =0$.
\end{prop}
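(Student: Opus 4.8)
The plan is to set $h := \mathcal{H}_J(g)$ and exploit the tension between the decomposition guaranteed by Remark \ref{rem_partial} and the Idempotence property (P4). Note first that $h \in V_J$ (since $\mathcal{H}_J$ maps into $V_J$) and that the hypothesis is exactly $\frac{\partial h}{\partial \mathbf{x}_J}=0$. I would invoke the second statement of Remark \ref{rem_partial}: because $h \in V_J$ satisfies $\frac{\partial h}{\partial \mathbf{x}_J}=0$, the function $h$ admits a decomposition $h = \sum_{J' \subset J} h_{J'}$ into strictly lower-order pieces, with each $h_{J'} \in V_{J'}$ and each $J'$ a proper subset of $J$.

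Next I would show that $\mathcal{H}_J$ annihilates every summand. Fix a proper subset $J' \subsetneq J$ and choose an index $k \in J \setminus J'$. Since $h_{J'} \in V_{J'}$ does not depend on the variable $x_k$, the partial difference factor ${}_k\Delta_{a_k}$ contained in ${}_J\Delta_{\mathbf{a}_J}$ sends $h_{J'}$ to $h_{J'} - h_{J'}|_{x_k=a_k}=0$; as the factors commute, this forces ${}_J\Delta_{\mathbf{a}_J}(h_{J'})=0$ for every $\mathbf{a}_J$, i.e. $\frac{\partial h_{J'}}{\partial \mathbf{x}_J}=0$ in the extended sense of Section \ref{subsubsection_notation_partial}. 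The Relevance property (P2) then yields $\mathcal{H}_J(h_{J'})=0$. By linearity of $\mathcal{H}_J$, summing over all $J' \subset J$ gives $\mathcal{H}_J(h)=0$.

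Finally I would compare this with the value that Idempotence forces. Applying (P4) to $g$ gives
\[
\mathcal{H}_J(h) = \mathcal{H}_J\bigl(\mathcal{H}_J(g)\bigr) = \mathcal{H}_J(g) = h.
\]
Combining the two evaluations of $\mathcal{H}_J(h)$ yields $h=0$, that is $\mathcal{H}_J(g)=0$, as required.

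The step I expect to be the most delicate is the first one: correctly reading Remark \ref{rem_partial} in the extended, possibly discontinuous sense of $\frac{\partial}{\partial \mathbf{x}_J}=0$ defined through the partial difference operator, rather than silently assuming classical differentiability, so that the decomposition $h=\sum_{J'\subset J}h_{J'}$ is legitimate for arbitrary $g$. Once that decomposition is secured, the remainder is a short interplay of (P2) and (P4) with no heavy computation.
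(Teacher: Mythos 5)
Your proof is correct and uses exactly the ingredients the paper intends: the proposition is stated there as a consequence of (P2) and (P4), and your closing interplay of $\mathcal{H}_J(h)=0$ (from Relevance) with $\mathcal{H}_J(h)=h$ (from Idempotence) is precisely that deduction. The only comment is that your first step, decomposing $h$ via Remark \ref{rem_partial} and annihilating each lower-order piece, is an unnecessary detour, since the hypothesis $\frac{\partial}{\partial \mathbf{x}_J}\left(\mathcal{H}_J(g)\right)=0$ already lets you apply (P2) directly to $h=\mathcal{H}_J(g)$ and obtain $\mathcal{H}_J(h)=0$ in one line.
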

This proposition is deduced from (P2) and (P4) and is not satisfied by the PD-based naive decomposition (defined later). 

\section{Theorem}\label{section_theorem}
This section proposes a useful theorem and a proposition implying the theorem. The proposition is proven in Section \ref{section_proof}.

\begin{thm}
If a set of linear operators $\left\{ \mathcal{L}_J:V_D\rightarrow V_J \right\} _{\emptyset \neq J\subseteq D}$ satisfies the following two properties:
\begin{itemize}
\setlength{\leftskip}{15pt}
    \item[(P2)*] If $\frac{\partial g}{\partial \mathbf{x}_J}=0$, then $\frac{\partial \mathcal{L}_J\left( g\right)}{\partial \mathbf{x}_J}=0$,
    \item[(P3)*] If $g\in V_J$, then $\frac{\partial}{\partial \mathbf{x}_J}\left( \mathcal{L}_J\left( g\right) -g\right) =0$,
\end{itemize}
then the set of linear operators $\left\{\mathcal{H}_J:V_D\rightarrow V_J\right\}_{J\subseteq D}$ recursively defined as
\begin{itemize}
\setlength{\leftskip}{15pt}
    \item[(D)] $\mathcal{H}_\emptyset:=\mathbb{E}, \quad\mathcal{H}_J:=\left( I-\sum _{J^{\prime} \subset J}\mathcal{H}_{J^{\prime}} \right) \circ \mathcal{L}_J$
\end{itemize}
yields a functional decomposition method $\left\{\mathcal{H}_J\right\}_{J\subseteq D}$ that satisfies the ID requirements (P1)--(P5).
If $\{ {\mathcal{L}_J}\} _{\emptyset\neq J\subseteq D}$ additionally satisfies the following property:
\begin{itemize}
\setlength{\leftskip}{15pt}
    \item[(P6)* ] When all feature variables are independent, for any $g\in V_D$, 
    \[ \frac{\partial}{\partial \mathbf{x}_J}\left( \mathcal{L}_J\left( g\right) -\mathbb{E}_{X_{\setminus J}}\left( g\right) \right) =0, \]
\end{itemize}
then the above-defined functional decomposition method $\left\{\mathcal{H}_J\right\}_{J\subseteq D}$ satisfies all ID requirements (P1)--(P6).
\end{thm}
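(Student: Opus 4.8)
The plan is to verify the six ID requirements directly from the recursion (D), reading the starred hypotheses as the ``derivative-level'' analogues that Remark \ref{rem_partial} upgrades into honest lower-order decompositions. Throughout I would argue by induction on $|J|$, writing $\mathcal{S}_J := \sum_{J' \subseteq J} \mathcal{H}_{J'}$ and $\mathcal{S}_J^{<} := \sum_{J' \subset J} \mathcal{H}_{J'}$, so that (D) reads $\mathcal{H}_J = (I - \mathcal{S}_J^{<}) \circ \mathcal{L}_J$ and the reconstruction identity is $\mathcal{S}_J(g) = \mathcal{L}_J(g) + \mathcal{S}_J^{<}(g - \mathcal{L}_J(g))$. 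Since the second half of (P1) follows from (P5) and (P4) follows from (P3) and (P5) (both already observed in the discussion after the requirements), the substantive work is (P2), (P3), (P5) and---given (P6)*---(P6); then (P1) and (P4) are free. The one recurring device is: if $h \in V_K$ with $K \subset J$, then (P2) kills every term $\mathcal{H}_{J'}(h)$ with $J' \not\subseteq K$, whence $\mathcal{S}_J^{<}(h) = \mathcal{S}_K(h) = h$ by (P3) for $K$.

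First I would prove Relevance (P2) and Lean Decomposability (P3) by a single simultaneous induction on $|J|$. In the step for (P2): if $\partial g / \partial \mathbf{x}_J = 0$ then (P2)* gives $\partial \mathcal{L}_J(g)/\partial \mathbf{x}_J = 0$, and since $\mathcal{L}_J(g) \in V_J$, Remark \ref{rem_partial} writes $\mathcal{L}_J(g) = \sum_{K \subset J} h_K$ with $h_K \in V_K$; applying the absorbing device to each $h_K$ gives $\mathcal{H}_J(g) = \mathcal{L}_J(g) - \mathcal{S}_J^{<}(\mathcal{L}_J(g)) = \sum_K (h_K - h_K) = 0$. In the step for (P3): if $g \in V_J$ then (P3)* and Remark \ref{rem_partial} give $g - \mathcal{L}_J(g) = \sum_{K \subset J} h_K$, and the same device yields $\mathcal{S}_J(g) = \mathcal{L}_J(g) + \mathcal{S}_J^{<}(g - \mathcal{L}_J(g)) = g$. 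Both steps call on (P2) and (P3) only at strictly smaller sizes, so the induction closes.

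Next I would treat Operational Orthogonality (P5) and Idempotence (P4) together, again inducting on $|J|$. For $J' \not\subseteq J$ the claim is immediate: $\mathcal{H}_J(g) \in V_J$ is invariant in some $x_j$ with $j \in J'$, so $\partial \mathcal{H}_J(g)/\partial \mathbf{x}_{J'} = 0$ and (P2) annihilates it. The genuine difficulty---and what I expect to be the main obstacle---is the nested case $J' \subset J$, which is not a consequence of Relevance and instead forces the bootstrapping in which (P4) and (P5) feed each other across sizes. Here I would expand $\mathcal{H}_{J'} \circ \mathcal{H}_J = \mathcal{H}_{J'} \mathcal{L}_J - \mathcal{H}_{J'} \mathcal{S}_J^{<} \mathcal{L}_J$ and compute $\mathcal{H}_{J'} \mathcal{S}_J^{<} = \sum_{K \subset J} \mathcal{H}_{J'} \mathcal{H}_K$: by the inductive (P5) every term with $K \neq J'$ vanishes, and the surviving $K = J'$ term equals $\mathcal{H}_{J'} \mathcal{H}_{J'} = \mathcal{H}_{J'}$ by the inductive (P4), so $\mathcal{H}_{J'} \mathcal{S}_J^{<} = \mathcal{H}_{J'}$ and the two pieces cancel. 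Getting the induction order right---proving (P5) at size $|J|$ from (P4) and (P5) below $|J|$, then deducing (P4) at size $|J|$ by applying (P3) to $\mathcal{H}_J(g) \in V_J$---is the delicate point. Finally (P1) is immediate: $\mathcal{H}_\emptyset = \mathbb{E}$ by (D), and $\mathbb{E} \circ \mathcal{H}_J = \mathcal{H}_\emptyset \circ \mathcal{H}_J = 0$ for $J \neq \emptyset$ by (P5).

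For Consistency with PD (P6) I would first reduce the target $\mathcal{S}_J = \mathbb{E}_{X_{\setminus J}}$ to one clean identity. Using (P6)* with Remark \ref{rem_partial} to write $\mathcal{L}_J(g) = \mathbb{E}_{X_{\setminus J}}(g) + r$ with a lower-order remainder $r = \sum_{K \subset J} r_K$ (which the absorbing device absorbs exactly, $\mathcal{S}_J^{<}(r) = r$), the reconstruction identity collapses to $\mathcal{S}_J(g) = \mathbb{E}_{X_{\setminus J}}(g) + \mathcal{S}_J^{<}(w)$ with $w := g - \mathbb{E}_{X_{\setminus J}}(g)$, so it remains to show $\mathcal{S}_J^{<}(w) = 0$. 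At exactly this point I would invoke the only genuinely probabilistic input, the tower identity $\mathbb{E}_{X_{\setminus J'}} \circ \mathbb{E}_{X_{\setminus J}} = \mathbb{E}_{X_{\setminus J'}}$ for $J' \subseteq J$, valid under independence because the marginals multiply; combined with the inductive hypothesis $\mathcal{S}_{J'} = \mathbb{E}_{X_{\setminus J'}}$ it gives $\mathcal{S}_{J'}(w) = \mathbb{E}_{X_{\setminus J'}}(g) - \mathbb{E}_{X_{\setminus J'}}(g) = 0$ for every $J' \subset J$, and a short M\"obius-type sub-induction on $|J'|$ upgrades this to $\mathcal{H}_{J'}(w) = 0$ for all $J' \subset J$, hence $\mathcal{S}_J^{<}(w) = 0$. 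A secondary subtlety worth isolating is precisely this: independence enters (P6) only through the tower identity, so (P1)--(P5) stand for arbitrary dependence structures.
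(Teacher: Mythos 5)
Your proposal is correct and follows essentially the same route as the paper's proof of Proposition \ref{prop_main}: your ``recurring device'' is exactly the paper's Lemma in Section \ref{section_proof} (proved the same way from Remark \ref{rem_partial}), your simultaneous induction for (P2)/(P3) is item (2), your handling of the non-nested case of (P5) via (P2) is item (5), and your tower-identity argument for (P6) matches item (6). The only differences are harmless reorderings within the same architecture: the paper proves (P4) first and then (P5) (your reverse order is precisely the alternative it acknowledges in the remark after Proposition \ref{prop_main}), it establishes (P1) unconditionally via the auxiliary operators $\mathcal{K}_J$ with $\mathcal{H}_J=(I-\mathbb{E})\circ\mathcal{K}_J$ rather than deducing it from (P5), and it proves (P6) by a direct operator computation rather than your induction on the hypothesis $\mathcal{S}_{J'}=\mathbb{E}_{X_{\setminus J'}}$.
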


\begin{proof}
The proof follows from (1), (2), (5), and (6) of Proposition \ref{prop_main} below.
\end{proof}

\begin{rem}
By the proposed theorem, if a suitable set of linear operators $\{ {\mathcal{L}_J}\} _{\emptyset\neq J\subseteq D}$ is prepared, it is guaranteed that an ID can be constructed based on that. Specific examples are introduced in the next section (Section \ref{section_example}).
\end{rem}

\begin{rem}
Among the properties of the theorem imposed on the set $\{ {\mathcal{L}_J}\} _{\emptyset\neq J\subseteq D}$, (P2)* corresponds to (P2) (Relevance), (P3)* corresponds to (P3) (Lean Decomposability), and (P6)* corresponds to (P6) (Consistency with PD), referring to (3), (4), and (7) of Proposition \ref{prop_main}, respectively.
\end{rem}

\begin{rem}
The set of operators $\left\{\mathcal{H}_J\right\}_{J\subseteq D}$ constructed by (D) can be considered as an adjustment of $\{ {\mathcal{L}_J}\} _{\emptyset\neq J\subseteq D}$ to ensure desirable properties. Indeed, when we define $\mathcal{L}_J:=\mathcal{H}_J$ for $J\ne\emptyset$ using $\left\{\mathcal{H}_J\right\}_{J\subseteq D}$ satisfying (P1)-(P5) ((P1)-(P6)), then it follows that $\{ {\mathcal{L}_J}\} _{\emptyset\neq J\subseteq D}$ satisfies (P2)* and (P3)* ((P2)*, (P3)*, and (P6)*) immediately from (P2) and (P3) ((P2), (P3), and (P6)), and it follows from (P5) that the set of operators constructed by (D) based on $\{ {\mathcal{L}_J}\} _{\emptyset\neq J\subseteq D}$ is $\left\{\mathcal{H}_J\right\}_{J\subseteq D}$ itself.
\end{rem}

\begin{rem}
According to the first part of the theorem (the entire theorem),  satisfying (P2)* and (P3)* ((P2)*, (P3)*, and (P6)*) is a sufficient condition for ensuring that the set $\left\{\mathcal{H}_J\right\}_{J\subseteq D}$ defined by (D) satisfies (P1)--(P5) ((P1)--(P6)). Note that, according to Propositions (3), (4), and (7) of Proposition \ref{prop_main}, they are necessary conditions as well.
\end{rem}

\begin{prop}\label{prop_main}
For the set $\left\{\mathcal{H}_J\right\}_{J\subseteq D}$ defined by (D) in the theorem, 
\begin{enumerate}
\renewcommand{\labelenumi}{(\arabic{enumi}) }
    \item (P1) always holds.
    \item (P2)* and (P3)* $\Rightarrow$ (P2) and (P3).
    \item (P2) $\Rightarrow$ (P2)*.
    \item (P3) $\Rightarrow$ (P3)*.
    \item (P2) and (P3) $\Rightarrow$ (P4) and (P5).
    \item (P2) and (P3) and (P6)* $\Rightarrow$ (P6).
    \item (P6) $\Rightarrow$ (P6)*.
\end{enumerate}
\end{prop}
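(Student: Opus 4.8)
The plan is to prove the seven items by inductions on $|J|$, organized around the rewriting of (D) as $\mathcal{H}_J=(I-S_J^{<})\circ\mathcal{L}_J$, where I abbreviate $S_J:=\sum_{J'\subseteq J}\mathcal{H}_{J'}$ and $S_J^{<}:=\sum_{J'\subset J}\mathcal{H}_{J'}$, so that $S_J=S_J^{<}+\mathcal{H}_J$. I would first record two facts used throughout. Fact (i): any $w\in V_K$ with $K\subset J$ is invariant in some $x_j$ with $j\in J\setminus K$, hence $\frac{\partial w}{\partial\mathbf{x}_J}=0$ (Remark \ref{rem_partial}); since $S_J^{<}(g)\in\sum_{K\subset J}V_K$ for every $g$, the correction term in (D) is $\mathbf{x}_J$-flat, and therefore $\frac{\partial}{\partial\mathbf{x}_J}\mathcal{L}_J(g)=\frac{\partial}{\partial\mathbf{x}_J}\mathcal{H}_J(g)$ always. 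Fact (ii), a \emph{fixing lemma}: once (P2) and (P3) are known for all proper subsets of $J$, one has $S_J^{<}(w)=w$ for every $w\in V_K$ with $K\subset J$, because $\sum_{J'\subseteq K}\mathcal{H}_{J'}(w)=w$ by (P3) for $K$, while $\mathcal{H}_{J'}(w)=0$ for the remaining $J'\subset J$ (those with $J'\nsubseteq K$) by (P2). Item (1) is then immediate by induction: applying $\mathbb{E}$ to (D) gives $\mathbb{E}\circ\mathcal{H}_J=(\mathbb{E}-\sum_{J'\subset J}\mathbb{E}\circ\mathcal{H}_{J'})\circ\mathcal{L}_J$, and the bracket equals $\mathbb{E}-\mathbb{E}=0$ since the $J'=\emptyset$ term contributes $\mathbb{E}\circ\mathbb{E}=\mathbb{E}$ and the rest vanish by hypothesis. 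For item (3), Fact (i) makes $\frac{\partial}{\partial\mathbf{x}_J}\mathcal{L}_J(g)=\frac{\partial}{\partial\mathbf{x}_J}\mathcal{H}_J(g)$ vanish whenever $\frac{\partial g}{\partial\mathbf{x}_J}=0$ by (P2), which is (P2)*; for item (4), taking $g\in V_J$ and using (D) together with (P3) (in the form $\mathcal{H}_J(g)-g=-S_J^{<}(g)$) yields $\mathcal{L}_J(g)-g=S_J^{<}(\mathcal{L}_J(g)-g)$, whose right-hand side is $\mathbf{x}_J$-flat by Fact (i), which is (P3)*.

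For item (2), the substantive constructive direction, I would run a simultaneous induction on $|J|$ for (P2) and (P3), so that Fact (ii) is available at each step. For (P3), given $g\in V_J$, (P3)* lets me write $\mathcal{L}_J(g)=g+h$ with $h$ $\mathbf{x}_J$-flat, hence $h=\sum_{K\subset J}h_K$ with $h_K\in V_K$ (Remark \ref{rem_partial}); a short expansion from (D) gives $S_J(g)-g=h-S_J^{<}(h)$, which is $0$ because Fact (ii) gives $S_J^{<}(h_K)=h_K$ for each $K\subset J$. For (P2), if $\frac{\partial g}{\partial\mathbf{x}_J}=0$ then $u:=\mathcal{L}_J(g)\in V_J$ is $\mathbf{x}_J$-flat by (P2)*, so $u=\sum_{K\subset J}u_K$ with $u_K\in V_K$, and $\mathcal{H}_J(g)=u-S_J^{<}(u)=0$ by the same fixing lemma. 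This closes the induction and yields (P2) and (P3).

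For item (5) I would induct on $|J|$, proving the combined statement $\mathcal{H}_K\circ\mathcal{H}_J=\delta_{KJ}\mathcal{H}_J$ for all $K$. There are three cases. (a) If $K\nsubseteq J$, then $\mathcal{H}_J(g)\in V_J$ is invariant in some $x_j$ with $j\in K$, so $\frac{\partial\mathcal{H}_J(g)}{\partial\mathbf{x}_K}=0$ and (P2) gives $\mathcal{H}_K\circ\mathcal{H}_J=0$ (no induction needed). (b) If $K\subset J$, expand by (D): $\mathcal{H}_K\circ\mathcal{H}_J=(\mathcal{H}_K-\mathcal{H}_K\circ S_J^{<})\circ\mathcal{L}_J$; every index appearing in $S_J^{<}$ has size $<|J|$, so the inductive hypothesis gives $\mathcal{H}_K\circ\mathcal{H}_{K''}=\delta_{KK''}\mathcal{H}_{K''}$, whence $\mathcal{H}_K\circ S_J^{<}=\mathcal{H}_K$ (as $K$ itself is among the $K''\subset J$) and the bracket vanishes. (c) For $K=J$, apply (P3) to $\mathcal{H}_J(g)\in V_J$ to get $\sum_{K\subseteq J}\mathcal{H}_K\circ\mathcal{H}_J=\mathcal{H}_J$ and cancel every $K\subset J$ term using (b), leaving $\mathcal{H}_J\circ\mathcal{H}_J=\mathcal{H}_J$; this is exactly the derivation already sketched after (P5) in the text.

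Items (6) and (7) treat the independent-features case, where the PD operator $\mathbb{E}_{X_{\setminus J}}$ (write $E_J$) fixes $V_J$ and obeys the nesting $E_{K}\circ E_J=E_{K\cap J}$. For item (7), (P6) reads $S_J=E_J$, so $\mathcal{H}_J(g)-E_J(g)=\mathcal{H}_J(g)-S_J(g)=-S_J^{<}(g)$ is $\mathbf{x}_J$-flat; combined with $\frac{\partial}{\partial\mathbf{x}_J}\mathcal{L}_J=\frac{\partial}{\partial\mathbf{x}_J}\mathcal{H}_J$ from Fact (i) this gives $\frac{\partial}{\partial\mathbf{x}_J}(\mathcal{L}_J(g)-E_J(g))=0$, i.e.\ (P6)*. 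For item (6) I would induct on $|J|$: writing $\mathcal{L}_J(g)=E_J(g)+r$ with $r$ $\mathbf{x}_J$-flat by (P6)*, using $S_J^{<}(r)=r$ from Fact (ii), and expanding (D), the computation reduces $S_J(g)-E_J(g)$ to $S_J^{<}(g-E_J(g))$, which vanishes because $\mathcal{H}_K\circ E_J=\mathcal{H}_K$ for every $K\subset J$ — a fact I would derive from the inductive hypothesis $S_K=E_K$ and the nesting $E_K\circ E_J=E_K$ (a further short induction on $|K|$, or Möbius inversion on the subset lattice). The main obstacle is bookkeeping the interlocking inductions in items (2) and (5): the recursion (D) pins down only the ``top'' behaviour of $\mathcal{H}_J$ (its $\frac{\partial}{\partial\mathbf{x}_J}$ and its action modulo $\sum_{K\subset J}V_K$), so every claim about $\mathcal{H}_J$ on the lower strata must be recovered from the inductive hypothesis through the fixing lemma, and one must keep each induction indexed by the size of the correct set — in (5) the size of the inner operator's index, not the outer one.
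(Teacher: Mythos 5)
Your proposal is correct, and its core machinery is the paper's own: your ``fixing lemma'' (Fact (ii)) is exactly the paper's Lemma, and item (2) is proved by the same simultaneous induction on $|J|$. The genuine differences are in how the remaining items are routed. First, the paper proves items (1), (3), and (4) via auxiliary operators $\mathcal{K}_j:=\mathcal{L}_j$, $\mathcal{K}_J:=\left(I-\sum_{\emptyset\neq J'\subset J}\mathcal{K}_{J'}\right)\circ\mathcal{L}_J$ and the identity $\mathcal{H}_J=(I-\mathbb{E})\circ\mathcal{K}_J$, deducing from $\mathcal{H}_J(g)=0$ that $\mathcal{K}_J(g)$ is constant and then unwinding the $\mathcal{K}$-recursion; your Fact (i)---that $\mathcal{L}_J(g)$ and $\mathcal{H}_J(g)$ always differ by the $\mathbf{x}_J$-flat function $\sum_{J'\subset J}\mathcal{H}_{J'}\left(\mathcal{L}_J(g)\right)$---dispenses with $\mathcal{K}_J$ entirely and turns (3) and (4) into one-line arguments, a real streamlining. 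Second, in item (5) the paper establishes (P4) first (invoking item (4) to obtain (P3)* and applying the Lemma to $\mathcal{L}_J\circ\mathcal{L}_J(g)-\mathcal{L}_J(g)$) and only then gets (P5) by induction, whereas you prove the combined statement $\mathcal{H}_K\circ\mathcal{H}_J=\delta_{KJ}\mathcal{H}_J$ in a single induction on the size of the inner index, deriving idempotence at each level from (P3) plus the just-proved same-level orthogonality (the ``(P3) and (P5) imply (P4)'' observation recorded in Section \ref{subsection_ID requirements}); your version of (5) therefore needs neither item (4) nor the Lemma. Third, the trade reverses in item (6): there the paper needs no induction at all, because the Lemma applied with (P6)* yields $\mathcal{H}_J=\left(I-\sum_{J'\subset J}\mathcal{H}_{J'}\right)\circ\mathbb{E}_{X_{\setminus J}}$ for every nonempty $J$ simultaneously, after which the nesting $\mathbb{E}_{X_{\setminus J'}}\circ\mathbb{E}_{X_{\setminus J}}=\mathbb{E}_{X_{\setminus J'}}$ annihilates every correction term in one computation; your outer induction on $|J|$ combined with an inner induction establishing $\mathcal{H}_K\circ\mathbb{E}_{X_{\setminus J}}=\mathcal{H}_K$ for $K\subset J$ is correct but, as you note yourself, costs extra bookkeeping. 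Net: your organization is leaner on items (1)--(5), the paper's is leaner on item (6); both rest on the same key lemma and the same well-founded induction on $|J|$.
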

\begin{rem}
The term (P4) on the right-hand side of (5) can be derived from (P3) and (P5) independently of (D), as mentioned in Section \ref{subsection_ID requirements}; therefore, the right-hand side can be expressed in terms of (P5) alone. However, it is here expressed in terms of (P4) and (P5) because we prove (P4) and then derive (P5) in our later proof.
\end{rem}

\section{Examples}\label{section_example}
Using our theorem, we now verify whether existing methods meet the ID requirements and present new methods that satisfy the ID requirements.

\subsection{Polynomial-based Decomposition}\label{subsection_polynomial-based}
If the prediction function is restricted to a polynomial of degree at most $r\in \mathbb{N}=\left\{ 0,1,2,\ldots \right\}$:
\[f\left( \mathbf{x}_D\right) = \sum _{\substack{0\leq \sum _{j=1}^d r_j \leq r, \\ r_j\in \mathbb{N}}}a_{r_1,\ldots ,r_d} \prod _{j=1}^d x_j^{r_j},\]
then a relatively simple functional decomposition method satisfies ID requirements (P1)--(P5).

As a very simple example, consider the decomposition $\left\{ f_{J,\mathrm{Poly1}}\right\} _{J\subseteq D}$ defined as
\[f_{J,\mathrm{Poly1}}\left( \mathbf{x}_J\right) :=\sum _{\substack{r_j\in \mathbb{N}\setminus \{ 0\} ,\\ \sum _{j\in J}r_j \leq r}}a_{r_1,\ldots ,r_d } \prod _{j\in J}x_j^{r_j}-c_J,\]
where $c_J$ is a constant defined such that $\mathbb{E}\left[ f_{J,\mathrm{Poly1}}\left( \mathbf{X}_J\right) \right] =0$.
	
As another example, consider the decomposition $\left\{ f_{J,\mathrm{Poly2}}\right\} _{J\subseteq D}$  defined as
\[ f_{J,\mathrm{Poly2}}\left( \mathbf{x}_J\right) :=\sum _{\substack{r_j\in \mathbb{N}\setminus \{ 0\} , \\ \sum _{j\in J}r_j \leq r}} b_{r_1,\ldots ,r_d} \prod _{j\in J}\left( x_j-\mathbb{E}[X_j]\right) ^{r_j}-c_J, \]
where $c_J$ is a constant defined such that $\mathbb{E}\left[ f_{J,\mathrm{Poly2}}\left( \mathbf{X}_J\right) \right] =0$ and $b_{r_1,\ldots,r_d}$ for $0\leq \sum _{j=1}^d r_j \leq r$, $r_j\in \mathbb{N}$ are defined such that 
\[f\left( \mathbf{x}_D\right) =\sum _{\substack{0\leq \sum _{j=1}^d r_j \leq r, \\ r_j\in \mathbb{N}}} b_{r_1,\ldots ,r_d} \prod _{j=1}^d \left( x_j-\mathbb{E}[X_j]\right)^{r_j}. \]

From the form of each term, both of the two methods generating these decompositions clearly satisfy ID requirements (P1)--(P5) but their satisfaction of (P6) cannot be generally expected. Therefore, these decomposition methods are not IDs in general.

\subsection{ALE}\label{subsection_ALE}
As shown in \cite{apley2020visualizing}, ALE can be defined for fairly general prediction functions, but here we simplify the definition to investigate the fundamental properties of ALE. This version is limited to sufficiently differentiable prediction functions (the detailed conditions are given in Theorems 1 and 2 of \cite{apley2020visualizing} and the corresponding parts in their online supplementary materials \cite{apley2016visualizing}). Also, in this section (Section \ref{subsection_ALE}) we assume that integrals and partial differentiations with respect to $\mathbf{x}_J$ are exchangeable with $\mathbb{E}_{\mathbf{X}_{\setminus J}}$ and that the minimum value that $X_j$ can take is $x_{\mathrm{min},j}$. In this case, the notation $\frac{\partial g}{\partial \mathbf{x}_J}=0$ is synonymous with the notation of usual partial differentiation and the functional decomposition by ALE $\left\{f_{J,\mathrm{ALE}}\right\}_{J\subseteq D}$ can be recursively defined as follows:
\[\mathcal{L}_{J,\mathrm{ALE}}\left( g\right) \left( \mathbf{x}_J\right) := \int _{\substack{x_{\mathrm{min}, j} < z_j < x_j, \\ j \in J}} \mathbb{E}\left[ \frac{\partial g}{\partial \mathbf{x}_J} \left(\mathbf{X}_D\right) \, \middle| \, \mathbf{X}_J=\mathbf{z}_J \right]  d\mathbf{z}_J, \quad\mathcal{H}_{\emptyset ,\mathrm{ALE}}:=\mathbb{E},\]
\[\mathcal{H}_{J,\mathrm{ALE}}:=\left(I-\sum _{J^{\prime}\subset J}\mathcal{H}_{J^{\prime}, \mathrm{ALE}} \right)\circ \mathcal{L}_{J,\mathrm{ALE}}, \quad f_{J,\mathrm{ALE}}:=\mathcal{H}_{J,\mathrm{ALE}} (f).\]
In particular, we have
\[f_{j,\mathrm{ALE}}\left( x_j\right) =\int _{x_{\mathrm{min},j}}^{x_j}\mathbb{E}\left[ \frac{\partial f}{\partial x_j} \left(\mathbf{X}_D\right) \, \middle| \, X_j=z_j \right] dz_j-c_j,\]
\begin{align*}
f_{\left\{ j,l\right\} ,\mathrm{ALE}}\left( x_j,x_l\right) = & \mathcal{L}_{\{ j,l\} ,\mathrm{ALE}}(f)(x_j,x_l)-\int _{x_{\mathrm{min},j}}^{x_j}\mathbb{E}\left[ \frac{\partial \mathcal{L}_{\{ j,l\} ,\mathrm{ALE}}(f)}{\partial x_j} \left(\mathbf{X}_D\right) \, \middle| \, X_j=z_j \right] dz_j\\
 & -\int _{x_{\mathrm{min},l}}^{x_l}\mathbb{E}\left[ \frac{\partial \mathcal{L}_{\{ j,l\} } ,\mathrm{ALE}(f)}{\partial x_l} \left(\mathbf{X}_D\right)\, \middle| \, X_l=z_l \right] dz_l-c_{jl}, 
\end{align*}
where $c_j$ and $c_{jl}$ are constants defined such that
\[\mathbb{E}\left[ f_{j,\mathrm{ALE}}\left( X_j\right) \right] =0 \text{ and } \mathbb{E}\left[f_{\{j,l\},\mathrm{ALE}}\left( X_j,X_l\right) \right] =0,\] 
respectively, and 
\[\mathcal{L}_{\left\{ j,l\right\} ,\mathrm{ALE}}\left( f\right) \left( x_j,x_l\right) =\int _{x_{\mathrm{min},l}}^{x_l}\int _{x_{\mathrm{min},j}}^{x_j}\mathbb{E}\left[ \frac{\partial ^2 f}{\partial x_j \partial x_l} \left(\mathbf{X}_D\right) \, \middle| \, X_j=z_j,X_l=z_l \right] dz_jdz_l. \]

\begin{rem}
This definition is, as a whole, simpler than the ``Definition for Higher-Order Effects'' in the original paper \cite{apley2020visualizing}. It is also more preferable, among others, in that we simply define $\mathcal{H}_D$ as $\mathcal{H}_J$ in the case $J=D$ rather than enforcing decomposability by separately defining $\mathcal{H}_D$ as in  \cite{apley2020visualizing}.
\end{rem}

The following proposition holds:

\begin{prop}
ALE is an ID.
\end{prop}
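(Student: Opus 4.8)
The plan is to invoke the Theorem of Section~\ref{section_theorem} directly. Since the ALE operators $\mathcal{H}_{J,\mathrm{ALE}}$ are \emph{defined} by the recursion (D) starting from the operators $\mathcal{L}_{J,\mathrm{ALE}}$, it suffices to verify that the family $\{\mathcal{L}_{J,\mathrm{ALE}}\}_{\emptyset\neq J\subseteq D}$ satisfies (P2)*, (P3)*, and (P6)*; the Theorem then guarantees that $\{\mathcal{H}_{J,\mathrm{ALE}}\}_{J\subseteq D}$ satisfies all of (P1)--(P6), i.e., that ALE is an ID. The common engine for all three checks is the multivariate fundamental theorem of calculus applied to the iterated integral defining $\mathcal{L}_{J,\mathrm{ALE}}$: differentiating under the integral (permitted by the standing exchangeability assumptions of Section~\ref{subsection_ALE}) gives
\[
\frac{\partial}{\partial \mathbf{x}_J}\,\mathcal{L}_{J,\mathrm{ALE}}(g)(\mathbf{x}_J)=\mathbb{E}\!\left[\frac{\partial g}{\partial \mathbf{x}_J}(\mathbf{X}_D)\,\middle|\,\mathbf{X}_J=\mathbf{x}_J\right].
\]
I would establish this identity first, as every subsequent step reads off from it.

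For (P2)*, I would observe that in this section $\frac{\partial g}{\partial\mathbf{x}_J}=0$ coincides with the vanishing of the usual mixed partial derivative, so the conditional expectation above is identically zero; in fact the integrand of $\mathcal{L}_{J,\mathrm{ALE}}(g)$ vanishes, giving the stronger conclusion $\mathcal{L}_{J,\mathrm{ALE}}(g)=0$, whence $\frac{\partial}{\partial\mathbf{x}_J}\mathcal{L}_{J,\mathrm{ALE}}(g)=0$. For (P3)*, I would take $g\in V_J$, so that $\frac{\partial g}{\partial\mathbf{x}_J}$ is already a function of $\mathbf{x}_J$ alone; conditioning on $\mathbf{X}_J=\mathbf{x}_J$ then leaves it unchanged, so the displayed derivative equals $\frac{\partial g}{\partial\mathbf{x}_J}$ and hence $\frac{\partial}{\partial\mathbf{x}_J}(\mathcal{L}_{J,\mathrm{ALE}}(g)-g)=0$.

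For (P6)*, I would compute $\frac{\partial}{\partial\mathbf{x}_J}\mathbb{E}_{X_{\setminus J}}(g)(\mathbf{x}_J)=\mathbb{E}\big[\frac{\partial g}{\partial\mathbf{x}_J}(\mathbf{x}_J,\mathbf{X}_{\setminus J})\big]$, again interchanging differentiation and expectation. Under independence of the features, conditioning on $\mathbf{X}_J=\mathbf{x}_J$ does not alter the distribution of $\mathbf{X}_{\setminus J}$, so the conditional expectation in the displayed identity reduces to exactly this marginal expectation over $\mathbf{X}_{\setminus J}$; subtracting yields $\frac{\partial}{\partial\mathbf{x}_J}(\mathcal{L}_{J,\mathrm{ALE}}(g)-\mathbb{E}_{X_{\setminus J}}(g))=0$.

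The main obstacle is not conceptual but lies in the careful bookkeeping of the multivariate fundamental theorem of calculus and the justification of interchanging $\frac{\partial}{\partial\mathbf{x}_J}$ with both the iterated integral and the (conditional) expectation; these are precisely the regularity and exchangeability hypotheses imposed at the start of Section~\ref{subsection_ALE}, so under those assumptions the argument is routine. I would also double-check the reduction to the Theorem itself---specifically that the $\{\mathcal{H}_{J,\mathrm{ALE}}\}$ coincide with the operators produced by (D), which is immediate from their definition---and that the (P6) claim is legitimately restricted to the independent case, matching the hypothesis of (P6)*.
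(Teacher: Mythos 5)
Your proposal is correct and follows essentially the same route as the paper: invoke the Theorem and verify that $\{\mathcal{L}_{J,\mathrm{ALE}}\}_{\emptyset\neq J\subseteq D}$ satisfies (P2)*, (P3)*, and (P6)*, with the (P6)* argument (independence turns the conditional expectation into $\mathbb{E}_{\mathbf{X}_{\setminus J}}$, then differentiate the iterated integral and exchange $\frac{\partial}{\partial\mathbf{x}_J}$ with the expectation) matching the paper's computation step for step. The only difference is that you spell out (P2)* and (P3)* via the fundamental-theorem-of-calculus identity, which the paper simply declares ``clear''---a harmless (indeed welcome) addition.
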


\begin{proof}
Because $\left\{\mathcal{L}_{J,\mathrm{ALE}}\right\}_{\emptyset\neq J\subseteq D}$ satisfies the assumptions of theorems (P2)*, (P3)*, and (P6)*, we conclude that ALE satisfies all ID requirements in the theorem. Clearly, $\left\{\mathcal{L}_{J,\mathrm{ALE}}\right\}_{\emptyset\neq J\subseteq D}$ satisfies (P2)* and (P3)*, so here we prove only that it satisfies (P6)*.

When all feature variables are independent, it follows that
\[
\mathbb{E}\left[ \frac{\partial g}{\partial \mathbf{x}_J}\left(\mathbf{X}_D\right)\, \middle| \, \mathbf{X}_J=\mathbf{z}_J \right] =\mathbb{E}_{\mathbf{X}_{\setminus J} } \left( \frac{\partial g}{\partial \mathbf{x}_J}\right) (\mathbf{z}_J),
\]
for any $g\in V_D$.
Therefore,
\begin{align*}
&\frac{\partial}{\partial \mathbf{x}_J}\left( \mathcal{L}_{J,\mathrm{ALE}}\left( g\right) -\mathbb{E}_{X_{\setminus J}}\left( g\right) \right) \left( \mathbf{x}_J\right) \\
& =\frac{\partial }{\partial \mathbf{x}_J}\left( \int _{\substack{x_{\mathrm{min}, j} < z_j < x_j, \\ j \in J}}\mathbb{E}\left[ \frac{\partial g}{\partial \mathbf{x}_J}\left(\mathbf{X}_D\right)\, \middle| \, \mathbf{X}_J=\mathbf{z}_J\right] d\mathbf{z}_J-\mathbb{E}_{X_{\setminus J}}\left( g\right) \left( \mathbf{x}_J\right) \right) \\
& =\frac{\partial }{\partial \mathbf{x}_J}\left( \int _{\substack{x_{\mathrm{min}, j} < z_j < x_j, \\ j \in J}}{\mathbb{E}_{X_{\setminus J}}\left( \frac{\partial g}{\partial \mathbf{x}_J}\right) (\mathbf{z}_J)}d\mathbf{z}_J-\mathbb{E}_{X_{\setminus J}}\left( g\right) \left( \mathbf{x}_J\right) \right) \\
 & =\mathbb{E}_{X_{\setminus J}}\left( \frac{\partial g}{\partial \mathbf{x}_J}\right) \left( \mathbf{x}_J\right) -\frac{\partial }{\partial \mathbf{x}_J}\left( \mathbb{E}_{X_{\setminus J}}\left( g\right) \right) \left( \mathbf{x}_J\right) \\
 & =\mathbb{E}_{X_{\setminus J}}\left( \frac{\partial g}{\partial \mathbf{x}_J}\right) \left( \mathbf{x}_J\right) -\mathbb{E}_{X_{\setminus J}}\left( \frac{\partial g}{\partial \mathbf{x}_J}\right) \left( \mathbf{x}_J\right) =0,
\end{align*}
confirming that $\mathcal{L}_{J,\mathrm{ALE}}$ satisfies (P6)*.
\end{proof}

\begin{rem}
Similarly to the above proposition, the online supplementary materials \cite{apley2016visualizing} for Apley et al. \cite{apley2020visualizing} indicate superior properties of ALE. The main differences between Apley et al.'s and our propositions are discussed in Section \ref{subsection_ID requirements} of this paper.
\end{rem}

\subsection{PD-based Naive Decomposition}\label{subsection_PD-naive}
The one-dimensional and two-dimensional PDs are not intended to provide functional decomposition terms. In particular, the two-dimensional PD captures all effects up to the second-order effect, not merely the additional effect that appears as a second-order term. However, when the significances or strengths of the interaction effects are measured using the PD-based Friedman's H-statistic \cite{friedman2008predictive} or its variant, the unnormalized H-statistic \cite{inglis2022visualizing}, we can naturally expect that the interaction effect terms in the functional decomposition are being extracted.

PD is naturally extendible to three or more dimensions. In fact, in the original paper of Friedman \cite{friedman2001greedy}, PD is defined before partial dependence plots and is not limited to one or two dimensions:
\[
\mathrm{PD}_J\left( \mathbf{x}_J\right) :=\mathbb{E}\left[ f(\mathbf{x}_J,\mathbf{X}_{\setminus J} )\right].
\]
It should be noted that the above notation differs from that in the original paper; in particular, Friedman \cite{friedman2001greedy} introduced partial dependence without using ``PD'' as a notation or even an abbreviation. From the above definition, we can also write $\mathrm{PD}_J = \mathbb{E}_{\mathbf{X}_{\setminus J}}\left( f\right)$, which will be used as required in subsequent discussions.

This definition leads to a recursive definition of the functional decomposition $\{f_{J,\mathrm{PD}}^{*}\} _{J\subseteq D}$, which was not discussed in the original paper or  (to our knowledge) in any subsequent papers. The symbol $*$ has no mathematical meaning but merely distinguishes the symbols of the functions $f_{J,\mathrm{PD}}^{*}$ from $f_{J,\mathrm{PD}}$ used in the PD-based proper decomposition introduced in Section \ref{subsection_PD-proper}.
\[
f_{\emptyset,\mathrm{PD}}^{*}:=\mathbb{E}[f(\mathbf{X}_D )],
\]
\[
f_{J,\mathrm{PD}}^{*}\left( \mathbf{x}_J\right) :=\mathrm{PD}_J (\mathbf{x}_J )-\sum _{\substack{J^{\prime}\subset J,\\ J^{\prime}\neq \emptyset }}f_{J^{\prime},\mathrm{PD}}^{*} (\mathbf{x}_{J^{\prime}})-\mathbb{E}\left[ \mathrm{PD}_J (\mathbf{X}_J )\right].
\]
In particular, we have
\[
f_{j,\mathrm{PD}}^*\left(x_j\right)=\mathrm{PD}_j (x_j )-\mathbb{E}[\mathrm{PD}_j (X_j )],
\]
\begin{align*}
f_{\left\{ j,l\right\} ,\mathrm{PD}}^{*}\left( x_j,x_l\right)= & \mathrm{PD}_{\{ j,l\} } (x_j,x_l )-f_{j,\mathrm{PD}}^{*}(x_j )-f_{l,\mathrm{PD}}^{*}(x_l )-\mathbb{E}[\mathrm{PD}_{\{ j,l\} }(X_j,X_l )]\\
 = & \mathrm{PD}_{\{ j,l\} }(x_j,x_l )-\mathrm{PD}_j (x_j )-\mathrm{PD}_l (x_l )\\
&-\mathbb{E}[\mathrm{PD}_{\{ j,l\} }(X_j,X_l )-\mathrm{PD}_j (X_j )-\mathrm{PD}_l (X_l )].
\end{align*}
Therefore, the H-statistic for $\{j, l\}$ defined in \cite{friedman2008predictive} is essentially based on this decomposition because it can be expressed as 
\[ H_{jl}^2= \frac{\sum_{i=1}^n \hat{f}_{\left\{ j,l\right\}, \mathrm{PD}}^{*}\left( x_{ij},x_{il}\right)^2} {\sum_{i=1}^n\left(\hat{f}_{\left\{ j,l\right\}, \mathrm{PD}}^{*}\left( x_{ij},x_{il}\right) +\hat{f}_{j, \mathrm{PD}}^{*}\left( x_{ij}\right) +\hat{f}_{l, \mathrm{PD}}^{*}\left( x_{il}\right) \right)^2}, \] where $x_{ij}$ and $x_{il}$ are the actual values from the dataset of sample size $n$, and $\hat{f}_{\left\{ j,l\right\}, \mathrm{PD}}^{*}$, $\hat{f}_{j, \mathrm{PD}}^{*}$, and $\hat{f}_{l, \mathrm{PD}}^{*}$ are the estimated versions of $f_{\left\{ j,l\right\}, \mathrm{PD}}^{*}$, $f_{j, \mathrm{PD}}^{*}$, and $f_{l, \mathrm{PD}}^{*}$, respectively.. 

This decomposition can also be defined as $f_{J,\mathrm{PD}}^{*}:=\mathcal{H}_{J,\mathrm{PD}}^{*} (f)$, where the linear operators $\left\{ \mathcal{H}_{J,\mathrm{PD}}^{*}\right\} _{J\subseteq D}$ are defined as
\[\mathcal{H}_{\emptyset ,\mathrm{PD}}^{*}:=\mathbb{E},\quad\mathcal{H}_{J,\mathrm{PD}}^{*}:=(I-\mathbb{E})\circ \left( \mathbb{E}_{\mathbf{X}_{\setminus J}}-\sum _{J^{\prime}\subset J}\mathcal{H}^{*}_{J^{\prime},\mathrm{PD}}\right) .\]

Hereafter, we refer to this functional decomposition method as PD-based naive decomposition, or simply as ``PD-based naive'' when the context is clear.

Despite this decomposition method being implicitly used broadly, from the perspective of our study, it does not possess the desirable properties. Specifically, the following proposition holds:

\begin{prop}
Although PD-based naive satisfies (P1)--(P3), it is not an ID.
\end{prop}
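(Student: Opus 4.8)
The plan is to treat the two assertions separately: first confirm that PD-based naive satisfies (P1)--(P3) by computing directly from the recursive operator definition $\mathcal{H}_{J,\mathrm{PD}}^{*}=(I-\mathbb{E})\circ(\mathbb{E}_{\mathbf{X}_{\setminus J}}-\sum_{J'\subset J}\mathcal{H}_{J',\mathrm{PD}}^{*})$, and then refute ID-membership by exhibiting a single explicit $f$ on which Idempotence (P4) breaks. Because the earlier necessary condition ``$\frac{\partial}{\partial\mathbf{x}_J}(\mathcal{H}_J(g))=0\Rightarrow\mathcal{H}_J(g)=0$'' is derived from (P2) and (P4), and I will have already shown (P2) holds for PD-based naive, a single counterexample to that necessary condition pins the blame squarely on (P4).

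For (P1)--(P3) I would first observe that every $\mathcal{H}_{J,\mathrm{PD}}^{*}$ with $J\neq\emptyset$ carries the outer factor $I-\mathbb{E}$, and $\mathbb{E}\circ(I-\mathbb{E})=\mathbb{E}-\mathbb{E}=0$ since $\mathbb{E}\circ\mathbb{E}=\mathbb{E}$; this yields (P1) at once for all $J\neq\emptyset$. A short computation using only (P1) then collapses $\mathbb{E}\circ\sum_{J'\subset J}\mathcal{H}_{J'}^{*}=\mathbb{E}$ and produces the closed form for the partial sums,
\[ S_J:=\sum_{J'\subseteq J}\mathcal{H}_{J',\mathrm{PD}}^{*}=(I-\mathbb{E})\circ\mathbb{E}_{\mathbf{X}_{\setminus J}}+\mathbb{E}. \]
Property (P3) is immediate from this: for $g\in V_J$ we have $\mathbb{E}_{\mathbf{X}_{\setminus J}}(g)=g$, so $S_J(g)=g-\mathbb{E}(g)+\mathbb{E}(g)=g$. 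For (P2) I would invert the identity for $S_J$ by inclusion--exclusion over the subset lattice to obtain, for $J\neq\emptyset$,
\[ \mathcal{H}_{J,\mathrm{PD}}^{*}=(I-\mathbb{E})\circ\sum_{J'\subseteq J}(-1)^{|J\setminus J'|}\mathbb{E}_{\mathbf{X}_{\setminus J'}}, \]
the constant term dropping out because $\sum_{J'\subseteq J}(-1)^{|J\setminus J'|}=0$. Then, given $\frac{\partial g}{\partial\mathbf{x}_J}=0$, Remark \ref{rem_partial} lets me reduce by linearity to a $g$ independent of some $x_k$ with $k\in J$; pairing each $J'\not\ni k$ with $J'\cup\{k\}$, the operators $\mathbb{E}_{\mathbf{X}_{\setminus J'}}$ and $\mathbb{E}_{\mathbf{X}_{\setminus(J'\cup\{k\})}}$ agree on such $g$ (averaging over $X_k$ does nothing, and the marginals over the remaining indices are consistent) while their signs are opposite, so the sum cancels and $\mathcal{H}_{J,\mathrm{PD}}^{*}(g)=0$.

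To refute ID-membership I would take $D=\{1,2,3\}$ with $X_1$ independent of $(X_2,X_3)$, $\mathbb{E}[X_3]=0$, and $\mathrm{Cov}(X_2,X_3)\neq 0$, and set $f(\mathbf{x}_D)=x_1x_2x_3$. Using the explicit two-dimensional formula, the partial dependences compute to $\mathrm{PD}_{\{1,2\}}=0$, $\mathrm{PD}_2=0$, and $\mathrm{PD}_1(x_1)=x_1\,\mathbb{E}[X_2X_3]$, giving
\[ \mathcal{H}_{\{1,2\},\mathrm{PD}}^{*}(f)=-\mathrm{Cov}(X_2,X_3)\,(x_1-\mathbb{E}[X_1]), \]
which is nonzero and lies in $V_{\{1\}}$. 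Crucially $\frac{\partial^2 f}{\partial x_1\partial x_2}=x_3\not\equiv 0$, so (P2) does \emph{not} force this ``interaction'' term to vanish; yet the extracted term depends only on $x_1$, so $\frac{\partial}{\partial\mathbf{x}_{\{1,2\}}}\mathcal{H}_{\{1,2\},\mathrm{PD}}^{*}(f)=0$. Feeding it back through $\mathcal{H}_{\{1,2\},\mathrm{PD}}^{*}$, property (P2) now \emph{does} apply and returns $0$, whence $\mathcal{H}_{\{1,2\},\mathrm{PD}}^{*}\circ\mathcal{H}_{\{1,2\},\mathrm{PD}}^{*}(f)=0\neq\mathcal{H}_{\{1,2\},\mathrm{PD}}^{*}(f)$, contradicting (P4) (equivalently, contradicting the stated necessary condition). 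Hence PD-based naive is not an ID.

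I expect the main obstacle to be twofold. The delicate analytic point in (P2) is justifying that $\mathbb{E}_{\mathbf{X}_{\setminus J'}}$ and $\mathbb{E}_{\mathbf{X}_{\setminus(J'\cup\{k\})}}$ genuinely coincide on functions independent of $x_k$; this rests on the consistency of the joint marginals and must be stated with care precisely because PD averages against \emph{marginal} rather than conditional distributions. The conceptual crux in the counterexample is engineering $f$ and the law of $\mathbf{X}_D$ so that $\frac{\partial^2 f}{\partial x_1\partial x_2}$ is not identically zero (otherwise (P2) annihilates the term and no counterexample arises) while the surviving PD term collapses to a pure function of $x_1$. The correlation between $X_2$ and $X_3$ together with $\mathbb{E}[X_3]=0$ is exactly what creates this pathology, mislabeling part of the $x_1$ main effect as an $\{x_1,x_2\}$ interaction.
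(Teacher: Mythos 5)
Your proposal is correct, and its architecture matches the paper's: establish (P1)--(P3) directly from the recursion, then break idempotence on $f=x_1x_2x_3$ with one feature independent of a correlated pair. Your counterexample is in fact the paper's up to relabeling --- the paper takes $X_1=U$, $X_2=U+V$, $X_3=W$ with $U,V,W$ independent, mean $0$, variance $1$, so its correlated pair is $\{X_1,X_2\}$ and the role of your term $-\mathrm{Cov}(X_2,X_3)\,(x_1-\mathbb{E}[X_1])$ is played by $f^{*}_{\{1,3\},\mathrm{PD}}=f^{*}_{\{2,3\},\mathrm{PD}}=-x_3$; the paper additionally records the resulting (P5) violation, which you do not need. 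Where you genuinely differ is the route to (P2). The paper proves, by induction on the recursion, the identity $\mathcal{H}^{*}_{J,\mathrm{PD}}(g)=(I-\mathbb{E})\bigl(\mathbb{E}\bigl[(-1)^{|J|}{}_J\Delta_{\mathbf{y}_J}(g)(\mathbf{X}_D)\bigr]\big|_{\mathbf{y}_J=\mathbf{X}_J}\bigr)$, after which (P2) is immediate, since $\frac{\partial g}{\partial \mathbf{x}_J}=0$ means precisely ${}_J\Delta_{\mathbf{y}_J}(g)\equiv 0$. You instead compute the partial sums $S_J=(I-\mathbb{E})\circ\mathbb{E}_{\mathbf{X}_{\setminus J}}+\mathbb{E}$ (which also makes (P3) explicit, where the paper says only ``by definition''), M\"obius-invert to $\mathcal{H}^{*}_{J,\mathrm{PD}}=(I-\mathbb{E})\circ\sum_{J'\subseteq J}(-1)^{|J\setminus J'|}\mathbb{E}_{\mathbf{X}_{\setminus J'}}$, and cancel in pairs $(J',J'\cup\{k\})$ using consistency of marginals. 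These are actually the same closed form --- expand ${}_J\Delta_{\mathbf{y}_J}(g)=\sum_{J'\subseteq J}(-1)^{|J'|}g|_{\mathbf{x}_{J'}=\mathbf{y}_{J'}}$ and take expectations termwise to turn the paper's identity into yours --- so the trade is: lattice inversion spares you the induction, but your finish (Remark \ref{rem_partial} plus the pairing argument) is longer than the paper's one-line vanishing of ${}_J\Delta$. Two small points to tighten: state that $X_1$ is non-degenerate, so that $x_1-\mathbb{E}[X_1]$ is not identically zero on the support and the extracted term is genuinely nonzero; and note that your pairing step uses no independence at all, only consistency of marginals, which is worth saying explicitly since you rightly flag it as the one delicate analytic point.
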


\begin{proof}
This method satisfies (P1) and (P3) by definition. To demonstrate that it also satisfies (P2), it is sufficient to show that
\[\mathcal{H}_{J,\mathrm{PD}}^*(g)(\mathbf{X}_J)
=(I-\mathbb{E})\left(\left.\mathbb{E}\left[(-1)^{|J|}{}_{J}\Delta _{\mathbf{y}_{J}}(g)(\mathbf{X}_D)\right]\right|_{\mathbf{y}_J=\mathbf{X}_J}\right).\]
When $J=\{j\}$, the proof is easily obtained through simple calculations. When $|J|\geq 2$, the following general formula
\[g|_{\mathbf{x}_J=\mathbf{y}_J}(\mathbf{x}_D)
= (I-{}_{j_1}\Delta_{y_{j_1}}) \circ \cdots \circ (I-{}_{j_{|J|}}\Delta_{y_{j_{|J|}}})(g)(\mathbf{x}_D)
= \sum_{J'\subseteq J}(-1)^{|J'|}{}_{J'}\Delta_{\mathbf{y}_{J'}}(g)(\mathbf{x}_D)\]
can be transformed as
\[\left(-1\right)^{|J|}{}_{J}\Delta_{\mathbf{y}_J}(g)(\mathbf{x}_D)
= g|_{\mathbf{x}_J=\mathbf{y}_J}(\mathbf{x}_D) - \sum_{J'\subset J}\left(-1\right)^{|J'|}{}_{J'}\Delta_{\mathbf{y}_{J'}}(g)(\mathbf{x}_D).\]
Replacing $\mathbf{x}_D$ with $\mathbf{X}_D$, the expectation is obtained as
\begin{align*}
\mathbb{E}\left[\left(-1\right)^{|J|}{}_{J}\Delta_{\mathbf{y}_J}(g)(\mathbf{X}_D)\right]
& = \mathbb{E}\left[g|_{\mathbf{x}_J=\mathbf{y}_J}(\mathbf{X}_D)\right] - \sum_{J'\subset J}\mathbb{E}\left[\left(-1\right)^{|J'|}{}_{J'}\Delta_{\mathbf{y}_{J'}}(g)(\mathbf{X}_D)\right] \\
& = \mathbb{E}_{\mathbf{X}_{\setminus J}}(g)(\mathbf{y}_J) - \sum_{J'\subset J}\mathbb{E}\left[\left(-1\right)^{|J'|}{}_{J'}\Delta_{\mathbf{y}_{J'}}(g)(\mathbf{X}_D)\right].
\end{align*}
Setting $\mathbf{y}_J = \mathbf{X}_J$, applying $(I-\mathbb{E})$ to both sides, and noting that $(I-\mathbb{E}) \circ (I-\mathbb{E}) = (I-\mathbb{E})$, we obtain
\begin{align*}
&(I-\mathbb{E})\left(\left.\mathbb{E}\left[\left(-1\right)^{|J|}{}_{J}\Delta_{\mathbf{y}_J}(g)(\mathbf{X}_D)\right]\right|_{\mathbf{y}_J=\mathbf{X}_J}\right) \\
& = (I-\mathbb{E})\left(\mathbb{E}_{\mathbf{X}_{\setminus J}}(g)(\mathbf{X}_J) - \sum_{J'\subset J}(I-\mathbb{E})\left(\left.\mathbb{E}\left[(-1)^{|J'|}{}_{J'}\Delta_{\mathbf{y}_{J'}}(g)(\mathbf{X}_D)\right]\right|_{\mathbf{y}_{J'}=\mathbf{X}_{J'}}\right)\right).
\end{align*}
Thus, $(I-\mathbb{E})\left(\left.\mathbb{E}\left[(-1)^{|J|}{}_{J}\Delta_{\mathbf{y}_J}(g)(\mathbf{X}_D)\right]\right|_{\mathbf{y}_J=\mathbf{X}_J}\right)$ satisfies the same recursive formula as $\mathcal{H}_{J,\mathrm{PD}}^*(g)(\mathbf{X}_J)$.
Therefore, $\mathcal{H}_{J,\mathrm{PD}}^*(g)(\mathbf{X}_J)
= (I-\mathbb{E})\left(\left.\mathbb{E}\left[(-1)^{|J|}{}_{J}\Delta_{\mathbf{y}_J}(g)(\mathbf{X}_D)\right]\right|_{\mathbf{y}_J=\mathbf{X}_J}\right)$ is proven by induction.

By example, we now demonstrate that this method does not satisfy (P4) and (P5) and is therefore not an ID.
Let the prediction function be $f\left( x_1,x_2,x_3\right) =x_1x_2x_3$, and let $X_1=U$, $X_2=U+V$, $X_3=W$, where $U$, $V$, and $W$ are independently distributed with mean 0 and variance 1. In this case, we have
\[
\mathrm{PD}_1\left( x_1\right) =\mathrm{PD}_2\left( x_2\right) =0,
\quad\mathrm{PD}_3\left( x_3\right) =x_3,
\]
\[
\mathrm{PD}_{\{ 1,2\} }\left( x_1,x_2\right) =\mathrm{PD}_{\{1,3\}}\left( x_1,x_3\right) =\mathrm{PD}_{\{ 2,3\} }\left( x_2,x_3\right) =0,
\]
\[
f_{\{ 1,2\},\mathrm{PD}}^{*}\left( x_1,x_2\right) =0,
\quad f_{\{ 1,3\},\mathrm{PD}}^{*}\left( x_1,x_3\right) =f_{\{ 2,3\},\mathrm{PD}}^{*}\left( x_2,x_3\right) = -x_3.
\]
Note that $f_{\{ 1,3\}, \mathrm{PD}}^{*}\left( x_1,x_3\right)$ (for example) is essentially a function of one variable $(-x_3)$. Therefore, applying $\mathcal{H}_{\{ 1,3\} ,\mathrm{PD}}^{*}$ to this function gives 0, which violates (P4). In contrast, applying $\mathcal{H}_{3,\mathrm{PD}}^{*}$ to this function gives a nonzero value, violating (P5). The same analysis on $f_{\{ 2,3\} ,\mathrm{PD}}^{*}\left( x_2,x_3\right)$ obtains the same results. Therefore, PD-based naive is not an ID.
\end{proof}

\begin{rem}
In the example in the above proof, $f_{\{ 1,3\}, \mathrm{PD}}^{*}\left( x_1,x_3\right)$ (for example) is essentially a function of one variable. However, an interaction effect term is not expected to be a function of one variable from a common-sense perspective. Therefore,  besides deviating from the ID requirements defined in this paper, PD-based naive is inappropriate from a practical viewpoint. However, PD-based naive is not entirely unreasonable because it satisfies (P2).
\end{rem}

\subsection{PD-based Proper Decomposition}\label{subsection_PD-proper}

As analyzed above, the ``natural'' PD-based functional decomposition method implicitly assumed in Friedman's H-statistic does not satisfy the ID requirements. Nevertheless, we can construct an ID based on PD.

Concretely, we can recursively define a functional decomposition $\left\{ f_{J,\mathrm{PD}}\right\} _{J\subseteq D}$ as follows:
\[\mathcal{L}_{J,\mathrm{PD}}:=\mathbb{E}_{\mathbf{X}_{\setminus J}}, \quad \mathcal{H}_{\emptyset,\mathrm{PD}}:=\mathbb{E},\]
\[\mathcal{H}_{J,\mathrm{PD}}:=\left( I-\sum_{J^{\prime}\subset J}\mathcal{H}_{J^{\prime},\mathrm{PD}} \right) \circ \mathcal{L}_{J,\mathrm{PD}}, \quad f_{J,\mathrm{PD}}:=\mathcal{H}_{J,\mathrm{PD}}(f).\]
In particular, we have
\[f_{j,\mathrm{PD}}\left(x_j\right)=\mathrm{PD}_j (x_j )-\mathbb{E}[\mathrm{PD}_j (X_j )],\]
\[f_{\{ j,l\} ,\mathrm{PD}}\left(x_j,x_l\right)=\mathrm{PD}_{\{ j,l\} }(x_j,x_l )-\mathbb{E}[\mathrm{PD}_{\{ j,l\} } (x_j,X_l )]-\mathbb{E}[\mathrm{PD}_{\{ j,l\} } (X_j,x_l )]-c_{jl},\]
where $c_{jl}$ is a constant defined such that $\mathbb{E}\left[ f_{\{ j,l\} ,\mathrm{PD}}\left( X_j,X_l\right) \right] =0$. 

We refer to the method generating this functional decomposition as PD-based proper decomposition, or simply ``PD-based proper'' when the context is clear.

The following propositions hold:

\begin{prop}
PD-based proper is an ID.
\end{prop}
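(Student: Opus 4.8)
The plan is to apply the Theorem directly. Since PD-based proper is, by construction, the output of the recipe (D) applied to the family $\mathcal{L}_{J,\mathrm{PD}} := \mathbb{E}_{\mathbf{X}_{\setminus J}}$, it suffices to verify that $\left\{\mathbb{E}_{\mathbf{X}_{\setminus J}}\right\}_{\emptyset\neq J\subseteq D}$ satisfies the three hypotheses (P2)*, (P3)*, and (P6)*. Once these are in hand, parts (1), (2), (5), and (6) of Proposition \ref{prop_main} guarantee that the resulting $\left\{\mathcal{H}_{J,\mathrm{PD}}\right\}_{J\subseteq D}$ satisfies all of (P1)--(P6), i.e.\ that PD-based proper is an ID.

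Two of the three hypotheses are essentially free. For (P3)*, if $g\in V_J$ then $g$ does not depend on $\mathbf{X}_{\setminus J}$, so $\mathbb{E}_{\mathbf{X}_{\setminus J}}(g)(\mathbf{x}_J)=\mathbb{E}[g(\mathbf{x}_J,\mathbf{X}_{\setminus J})]=g(\mathbf{x}_J)$; hence $\mathcal{L}_{J,\mathrm{PD}}(g)-g=0$ identically and (P3)* holds in the strongest possible way. For (P6)*, since $\mathcal{L}_{J,\mathrm{PD}}$ is \emph{literally} $\mathbb{E}_{\mathbf{X}_{\setminus J}}$, the difference $\mathcal{L}_{J,\mathrm{PD}}(g)-\mathbb{E}_{X_{\setminus J}}(g)$ vanishes identically for every $g\in V_D$ --- without even invoking the independence assumption --- so (P6)* is immediate.

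The only hypothesis requiring an argument is (P2)*, and the key is a commutation identity. I would first show that for each $j\in J$ the partial difference operator ${}_j\Delta_{a_j}$ commutes with $\mathbb{E}_{\mathbf{X}_{\setminus J}}$: because the evaluation $|_{x_j=a_j}$ with $j\in J$ acts on a variable disjoint from the integration variables $\mathbf{X}_{\setminus J}$, substituting and then averaging over $\mathbf{X}_{\setminus J}$ may be exchanged (by linearity of expectation, legitimate under the compact-support and boundedness assumptions of Section \ref{subsubsection_assumptions}). Composing over all $j\in J$ yields ${}_J\Delta_{\mathbf{a}_J}\left(\mathbb{E}_{\mathbf{X}_{\setminus J}}(g)\right)=\mathbb{E}_{\mathbf{X}_{\setminus J}}\left({}_J\Delta_{\mathbf{a}_J}(g)\right)$. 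Then, invoking the extended meaning of $\frac{\partial g}{\partial \mathbf{x}_J}=0$ from Section \ref{subsubsection_notation_partial} --- namely ${}_J\Delta_{\mathbf{a}_J}(g)=0$ for every admissible $\mathbf{a}_J$ --- the right-hand side becomes $\mathbb{E}_{\mathbf{X}_{\setminus J}}(0)=0$, so ${}_J\Delta_{\mathbf{a}_J}\left(\mathcal{L}_{J,\mathrm{PD}}(g)\right)=0$ for all $\mathbf{a}_J$, which is exactly $\frac{\partial \mathcal{L}_{J,\mathrm{PD}}(g)}{\partial \mathbf{x}_J}=0$. This establishes (P2)*.

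The main --- and indeed only --- point requiring care is this commutation step; everything else is definitional. The subtlety is purely bookkeeping about which variables are differenced versus integrated, together with the Fubini/linearity justification secured by the boundedness assumptions. No genuine obstacle arises, which reflects the fact that $\mathbb{E}_{\mathbf{X}_{\setminus J}}$ is tailor-made to satisfy these hypotheses: the substantive work has already been absorbed into the Theorem, and here it is only a matter of discharging the three premises.
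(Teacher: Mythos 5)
Your proposal is correct and follows essentially the same route as the paper: verify that $\mathcal{L}_{J,\mathrm{PD}}=\mathbb{E}_{\mathbf{X}_{\setminus J}}$ satisfies (P2)* via commuting ${}_J\Delta_{\mathbf{a}_J}$ with the expectation, note that (P3)* and (P6)* hold trivially because $\mathbb{E}_{\mathbf{X}_{\setminus J}}$ fixes $V_J$ and is literally the operator in (P6)*, and then invoke the Theorem. The only cosmetic difference is that the paper cites its consistency proposition for (P6)* where you observe (more directly) that the difference vanishes identically.
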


\begin{proof}
If $\frac{\partial g}{\partial \mathbf{x}_J}=0$, then 
\[ {}_J\Delta _{\mathbf{a}_J}\mathcal{L}_{J,\mathrm{PD}}\left( g\right) \left( \mathbf{x}_J\right) ={}_J\Delta _{\mathbf{a}_J}\mathbb{E}\left[ g\left( \mathbf{x}_J,\mathbf{X}_{\setminus J}\right) \right] =\mathbb{E}\left[ {}_J\Delta _{\mathbf{a}_J}g\left( \mathbf{x}_J,\mathbf{X}_{\setminus J}\right) \right] =0. \] 
Therefore, $\left\{ \mathcal{L}_{J,\mathrm{PD}}\right\} _{\emptyset \neq J\subseteq D}$ satisfies (P2)*. For any $g\in V_J$, we have $\mathcal{L}_{J,\mathrm{PD}}\left( g\right) \left( \mathbf{x}_J\right) =\mathbb{E}\left[g\left( \mathbf{x}_J\right) \right]=g\left( \mathbf{x}_J\right)$, meaning that $\left\{\mathcal{L}_{J,\mathrm{PD}}\right\}_{\emptyset\neq J\subseteq D}$ satisfies (P3)*. Furthermore, as $\mathcal{L}_{J,\mathrm{PD}}=\mathbb{E}_{X_{\setminus J}}$, $\left\{\mathcal{L}_{J,\mathrm{PD}}\right\}_{\emptyset\neq J\subseteq D}$ also satisfies (P6)* (see Proposition \ref{prop_consistency_of_PDs}). Therefore, PD-based proper is an ID under the proposed theorem.
\end{proof}

\begin{prop}\label{prop_consistency_of_PDs}
If all feature variables are independent, then PD-based proper and PD-based naive are equivalent.
\end{prop}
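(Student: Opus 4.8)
The plan is to prove the stronger operator-level statement that, under independence, $\mathcal{H}_{J,\mathrm{PD}}=\mathcal{H}_{J,\mathrm{PD}}^{*}$ as operators on $V_D$ for every $J\subseteq D$, and then note that, since both decompositions arise by applying these operators to the same $f$, operator equality yields equality of the two decompositions. I would argue by induction on $|J|$, the base case $J=\emptyset$ being immediate ($\mathcal{H}_{\emptyset,\mathrm{PD}}=\mathcal{H}_{\emptyset,\mathrm{PD}}^{*}=\mathbb{E}$). The whole point of the inductive step is to reduce each of the two \emph{different}-looking recursions to the single common form $\mathcal{H}_{J}=\mathbb{E}_{\mathbf{X}_{\setminus J}}-\sum_{J'\subset J}\mathcal{H}_{J'}$; once both operators satisfy the same recursion with the same base value, the induction hypothesis $\mathcal{H}_{J',\mathrm{PD}}=\mathcal{H}_{J',\mathrm{PD}}^{*}$ for $J'\subset J$ forces $\mathcal{H}_{J,\mathrm{PD}}=\mathcal{H}_{J,\mathrm{PD}}^{*}$.

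The single place where independence enters — and the main obstacle — is a marginalization-collapse identity: for $J'\subseteq J$ one has $\mathbb{E}_{\mathbf{X}_{\setminus J'}}\circ\mathbb{E}_{\mathbf{X}_{\setminus J}}=\mathbb{E}_{\mathbf{X}_{\setminus J'}}$. I would prove this first, being careful that both $\mathbb{E}_{\mathbf{X}_{\setminus J}}$ and $\mathbb{E}_{\mathbf{X}_{\setminus J'}}$ integrate against \emph{marginal} (not conditional) laws: the composition integrates $g$ against the product of the marginals of $\mathbf{X}_{J\setminus J'}$ and $\mathbf{X}_{\setminus J}$, and this product coincides with the joint marginal of $\mathbf{X}_{\setminus J'}$ precisely because the features are independent. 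The special case $J'=\emptyset$ reads $\mathbb{E}\circ\mathbb{E}_{\mathbf{X}_{\setminus J}}=\mathbb{E}$, which fails in general without independence; this is exactly why the equivalence is only asserted under independence.

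For the proper side the reduction is then a one-line consequence of the collapse identity. Since $\mathcal{L}_{J',\mathrm{PD}}=\mathbb{E}_{\mathbf{X}_{\setminus J'}}$ is the rightmost factor in $\mathcal{H}_{J',\mathrm{PD}}=\bigl(I-\sum_{J''\subset J'}\mathcal{H}_{J'',\mathrm{PD}}\bigr)\circ\mathbb{E}_{\mathbf{X}_{\setminus J'}}$, composing on the right with $\mathbb{E}_{\mathbf{X}_{\setminus J}}$ only touches that factor, and the collapse identity (valid because $J'\subset J$) absorbs it, giving $\mathcal{H}_{J',\mathrm{PD}}\circ\mathbb{E}_{\mathbf{X}_{\setminus J}}=\mathcal{H}_{J',\mathrm{PD}}$. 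Substituting this into the expanded definition $\mathcal{H}_{J,\mathrm{PD}}=\mathbb{E}_{\mathbf{X}_{\setminus J}}-\sum_{J'\subset J}\mathcal{H}_{J',\mathrm{PD}}\circ\mathbb{E}_{\mathbf{X}_{\setminus J}}$ yields at once $\mathcal{H}_{J,\mathrm{PD}}=\mathbb{E}_{\mathbf{X}_{\setminus J}}-\sum_{J'\subset J}\mathcal{H}_{J',\mathrm{PD}}$, the common form.

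For the naive side I would show that the outer $(I-\mathbb{E})$ acts as the identity on $B:=\mathbb{E}_{\mathbf{X}_{\setminus J}}-\sum_{J'\subset J}\mathcal{H}_{J',\mathrm{PD}}^{*}$, i.e. that $\mathbb{E}\circ B=0$. Indeed $\mathbb{E}\circ\mathbb{E}_{\mathbf{X}_{\setminus J}}=\mathbb{E}$ by the $J'=\emptyset$ collapse, while the already-established unbiasedness (P1) of PD-based naive gives $\mathbb{E}\circ\mathcal{H}_{J',\mathrm{PD}}^{*}=0$ for $\emptyset\neq J'\subset J$ and $\mathbb{E}\circ\mathcal{H}_{\emptyset,\mathrm{PD}}^{*}=\mathbb{E}$; these cancel, so $\mathbb{E}\circ B=\mathbb{E}-\mathbb{E}=0$. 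Hence $\mathcal{H}_{J,\mathrm{PD}}^{*}=(I-\mathbb{E})\circ B=B=\mathbb{E}_{\mathbf{X}_{\setminus J}}-\sum_{J'\subset J}\mathcal{H}_{J',\mathrm{PD}}^{*}$, matching the proper recursion, and the induction closes. I expect the only genuine difficulty to be the careful statement and proof of the collapse identity — identifying the product of marginals with the joint law is where independence is indispensable — with everything else being bookkeeping around that identity and the outer centering operator $(I-\mathbb{E})$.
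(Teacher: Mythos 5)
Your proposal is correct and follows essentially the same route as the paper's own proof: the same collapse identity $\mathbb{E}_{\mathbf{X}_{\setminus J'}}\circ\mathbb{E}_{\mathbf{X}_{\setminus J}}=\mathbb{E}_{\mathbf{X}_{\setminus J'}}$ as the sole point where independence enters, the same use of unbiasedness (P1) to strip the outer $(I-\mathbb{E})$ on the naive side, the same absorption of $\mathcal{L}_{J,\mathrm{PD}}$ into $\mathcal{H}_{J',\mathrm{PD}}$ on the proper side, and the same inductive conclusion once both recursions are reduced to the common form $\mathcal{H}_J=\mathbb{E}_{\mathbf{X}_{\setminus J}}-\sum_{J'\subset J}\mathcal{H}_{J'}$.
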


\begin{proof}
Suppose that all feature variables are independent. Note that in this case, $\mathbb{E}_{X_{\setminus J'}}\circ \mathbb{E}_{X_{\setminus J}}=\mathbb{E}_{X_{\setminus J'}}$ for any $J' \subset J$. On the one hand, because PD-based naive satisfies (P1), we have, for nonempty $J\subseteq D$,
\begin{align*}
\mathcal{H}_{J,\mathrm{PD}}^{*}&=\left( I-\mathbb{E}\right) \circ \left( \mathbb{E}_{\mathbf{X}_{\setminus J}}-\sum_{J'\subset J}\mathcal{H}_{J',\mathrm{PD}}^{*}\right)\\ &=\mathbb{E}_{\mathbf{X}_{\setminus J}}-\mathbb{E}-\sum_{J'\subset J}\mathcal{H}_{J',\mathrm{PD}}^{*}+\mathbb{E}
=\mathbb{E}_{\mathbf{X}_{\setminus J}}-\sum_{J'\subset J}\mathcal{H}_{J',\mathrm{PD}}^{*}.\end{align*}
Therefore, $\sum_{J'\subseteq J}\mathcal{H}_{J',\mathrm{PD}}^{*}=\mathbb{E}_{X_{\setminus J}}$. On the other hand, for PD-based proper with nonempty $J' \subset J$, we have
\begin{align*}
\mathcal{H}_{J',\mathrm{PD}}\circ \mathcal{L}_{J,\mathrm{PD}}
&=\left( I-\sum_{J''\subset J'}\mathcal{H}_{J'',\mathrm{PD}}\right) \circ \mathcal{L}_{J',\mathrm{PD}}\circ \mathcal{L}_{J,\mathrm{PD}}\\
&=\left( I-\sum_{J''\subset J'}\mathcal{H}_{J'',\mathrm{PD}}\right) \circ \mathcal{L}_{J',\mathrm{PD}}
=\mathcal{H}_{J',\mathrm{PD}},
\end{align*}
and $\mathcal{H}_{\emptyset,\mathrm{PD}}\circ \mathcal{L}_{J,\mathrm{PD}}=\mathcal{H}_{\emptyset,\mathrm{PD}}$. Thus,
\[\mathcal{H}_{J,\mathrm{PD}}=\left( I-\sum_{J'\subset J}\mathcal{H}_{J',\mathrm{PD}}\right) \circ \mathcal{L}_{J,\mathrm{PD}}=\mathbb{E}_{\mathbf{X}_{\setminus J}}-\sum_{J'\subset J}\mathcal{H}_{J',\mathrm{PD}},\]
meaning that $\sum_{J'\subseteq J}\mathcal{H}_{J',\mathrm{PD}}=\mathbb{E}_{X_{\setminus J}}$ and hence $\sum_{J'\subseteq J}\mathcal{H}_{J',\mathrm{PD}}^{*}=\sum_{J'\subseteq J}\mathcal{H}_{J',\mathrm{PD}}=\mathbb{E}_{X_{\setminus J}}$. Therefore, by induction, PD-based proper and PD-based naive are equivalent.
\end{proof}

\subsection{Functional ANOVA}\label{subsection_functionalANOVA}

Functional ANOVA has long been discussed and developed in various forms. To relate functional ANOVA to ALE, we can naturally refer to the formulation of \cite{hooker2007generalized} or its prior work \cite{hooker2004discovering} (see \cite{apley2020visualizing} and \cite{molnar2020interpretable}). However, these papers adopt the formulation of \cite{owen2003dimension}, which assumes that $\mathbf{x}_D$ takes values in $[0,1]^d$ and which does not align with the context of this paper.
Thus, we adopt the formulation of \cite{efron1981jackknife} (see \cite{roosen1995visualization} for details), which is also referred to in \cite{apley2020visualizing}, altering its notation to more closely align with the present paper.

Specifically, the functional decomposition by functional ANOVA $\left\{ f_{J , \mathrm{FA}}\right\} _{J\subseteq D}$ can be recursively defined as follows:
\[f_{\emptyset , \mathrm{FA}}:=\mathbb{E}[f(\mathbf{X}_D)],\]
\begin{align*}
f_{J , \mathrm{FA}}\left( \mathbf{x}_J\right)
& :=\mathbb{E}\left[ f(\mathbf{X}_D)-\sum_{J^{\prime}\subset J}f_{J^{\prime} , \mathrm{FA}}(\mathbf{X}_{J^{\prime}})\, \middle| \, \mathbf{X}_J=\mathbf{x}_J \right]\\
& =\mathbb{E}[f(\mathbf{X}_D) | \mathbf{X}_J=\mathbf{x}_J]-\sum_{J^{\prime}\subset J}f_{J^{\prime} , \mathrm{FA}}(\mathbf{x}_{J^{\prime}}).
\end{align*}
In particular, we have
\[f_{j , \mathrm{FA}}\left( x_j\right) =\mathbb{E}\left[ f\left( \mathbf{X}_D\right) |X_j=x_j\right] -f_{\emptyset , \mathrm{FA}},\]
\[f_{\{ j,l\} , \mathrm{FA} }\left( \mathbf{x}_{\{ j,l\} }\right) =\mathbb{E}\left[ f\left( \mathbf{X}_D\right) |\mathbf{X}_{\left\{ j,l\right\} }=\mathbf{x}_{\left\{ j,l\right\} }\right] -f_{j , \mathrm{FA}}\left( x_j\right) -f_{l , \mathrm{FA}}\left( x_l\right) +f_{\emptyset , \mathrm{FA}}.\]

The formulation in the original context assumes that all feature variables are independent, meaning that $\mathbb{E}\left[ f_{J , \mathrm{FA}}\left( \mathbf{X}_J\right) f_{J^{\prime} , \mathrm{FA}}\left( \mathbf{X}_{J^{\prime}}\right) \right] =0$ when $J\neq J^{\prime}$. This type of orthogonality, for which $\mathbb{E}\left[ f_{J , \mathrm{FA}}\left( \mathbf{X}_J\right) f_{J^{\prime} , \mathrm{FA}}\left( \mathbf{X}_{J^{\prime}}\right) \right]$  can be considered as an inner product, is essential for functional ANOVA. Under the orthogonality condition, this method dissociates the variance, as suggested by its name. These points are not further discussed as they are outside the scope of this paper.

When the feature variables are not independent, the decomposition results are not orthogonal and the variance is not neatly dissociated. In such a case, even Proposition \ref{prop_centering} is not guaranteed. For example, if $f(x_1,x_2)=x_1+x_2$ where $X_1\sim U$, $X_2\sim U+V$, and $U$ and $V$ are independently standard normally distributed, we have $f_{1, \mathrm{FA}}(x_1)=2x_1 \ne x_1 = x_1 -\mathbb{E}\left[X_1\right] $ and the decomposition method is clearly not an ID.

The results in cases of independent feature variables theoretically coincide with PD-based decompositions. However, in practice, as the true function is not explicitly given, the obtained results can differ because functional ANOVA and PD in the literature use different estimation methods for decomposition based on the data.

The generalized functional ANOVA proposed in \cite{hooker2007generalized} provides a decomposition without assuming independence among feature variables. As this approach builds upon the discussions in \cite{hooker2004discovering}, it effectively assumes that feature variables follow uniform distributions as a starting point and generalizes the decomposition with weighting. Therefore, it cannot be simply compared with our present formulation and is not specifically defined here. Instead, we summarize that the method provides a decomposition through optimization under the constraint of specific orthogonality (if $J^{\prime}\subset J$, the inner product of $f_J\left( \mathbf{x}_J\right)$ and $f_{J^{\prime}}\left( \mathbf{x}_{J^{\prime}}\right)$ is 0) when the distribution of feature variables is unknown. The decomposition method, similarly to its predecessor functional ANOVA, is not an ID because it also does not guarantee Proposition \ref{prop_centering} (see Section 5.6 in \cite{apley2020visualizing}).

\subsection{CE-based Decomposition}\label{subsection_CE-based}
Utilizing our theorem, we can devise previously unreported IDs simply by finding a set $\left\{ \mathcal{L}_J\right\} _{\emptyset \neq J\subseteq D}$ that satisfies the assumptions of the theorem ((P2)*, (P3)*, and (P6)*). An example is shown below.

We first select a suitable representative value $\mathrm{Rep}_j$ from the possible values of each feature variable $x_j$. Typical representative values are the expectation, median, and mode. Denoting the vector of representative values $\left( \mathrm{Rep}_j \right) _{j\in J}$ as $\mathrm{Rep}_J$, we define
\[\mathrm{CE}_J\left( \mathbf{x}_J\right) :=\mathbb{E}\left[ {}_J\Delta _{\mathrm{Rep}_J}(f)\left( \mathbf{x}_J,\mathbf{X}_{\setminus J}\right) \, \middle| \, \mathbf{X}_J=\mathbf{x}_J \right].\]
In particular, we have
\[\mathrm{CE}_j\left( x_j\right) =\mathbb{E}\left[ f(x_j,\mathbf{X}_{\setminus j})-f(\mathrm{Rep}_j, \mathbf{X}_{\setminus j}) \, \middle| \, X_j=x_j \right].\]

Intuitively, we note that $\mathrm{CE}_J\left( \mathbf{x}_J\right)$ uses the difference from the reference point when $\mathbf{x}_J=\mathrm{Rep}_J$ and measures the average effect at each value of $\mathbf{x}_J$ using the conditional expectation.

Based on $\mathrm{CE}_J$, we recursively define the functional decomposition $\left\{ f_{J,\mathrm{CE}}\right\} _{J\subseteq D}$ similarly to PD-based proper as follows:
\[\mathcal{L}_{J,\mathrm{CE}}\left( g\right) \left( \mathbf{x}_J\right) :=\mathbb{E}\left[ {}_J\Delta _{\mathrm{Rep}_J}(g)(\mathbf{x}_J,\mathbf{X}_{\setminus J}) \, \middle| \, \mathbf{X}_J=\mathbf{x}_J \right], 
\quad\mathcal{H}_{\emptyset ,\mathrm{CE}}:=\mathbb{E},\]
\[\mathcal{H}_{J,\mathrm{CE}}:=\left( I-\sum _{J^{\prime}\subset J}\mathcal{H}_{J^{\prime},\mathrm{CE}}\right) \circ \mathcal{L}_{J,\mathrm{CE}}, 
\quad f_{J,\mathrm{CE}}:=\mathcal{H}_{J,\mathrm{CE}}(f).\]
In particular, we have
\[f_{j,\mathrm{CE}}\left( x_j\right) = \mathrm{CE}_j (x_j )-\mathbb{E}[\mathrm{CE}_j (X_j )],\]
\begin{align*}
&f_{\left\{ j,l\right\} ,\mathrm{CE}}\left( x_j,x_l\right)\\
& = \mathrm{CE}_{\{ j,l\} }(x_j,x_l )-\mathbb{E}\left[ \mathrm{CE}_{\{ j,l\} }(x_j,X_l )-\mathrm{CE}_{\{ j,l\} }(\mathrm{Rep}_j, X_l ) \, \middle| \, X_j=x_j \right]\\
&\quad -\mathbb{E}\left[ \mathrm{CE}_{\{ j,l\} }(X_j,x_l )-\mathrm{CE}_{\{ j,l\} } (X_j,\mathrm{Rep}_l) \, \middle| \, X_l=x_l \right] -c_{jl},
\end{align*}
where $c_{jl}$ is a constant defined such that $\mathbb{E}\left[ f_{\left\{ j,l\right\} ,\mathrm{CE}}\left( X_j, X_l\right) \right] =0$. 

We refer to the method generating this functional decomposition as CE-based decomposition, or simply ``CE-based'' when the context is clear.

The following proposition holds:

\begin{prop}
CE-based is an ID.
\end{prop}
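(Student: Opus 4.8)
The plan is to invoke the proposed theorem. Since $\mathcal{H}_{J,\mathrm{CE}}$ is built from $\{\mathcal{L}_{J,\mathrm{CE}}\}_{\emptyset\neq J\subseteq D}$ through exactly the recursion (D), it suffices to verify that the operators $\mathcal{L}_{J,\mathrm{CE}}$ satisfy (P2)*, (P3)*, and (P6)*; the theorem then delivers (P1)--(P6) for $\{\mathcal{H}_{J,\mathrm{CE}}\}_{J\subseteq D}$, i.e.\ that CE-based is an ID. Throughout I would keep the working form $\mathcal{L}_{J,\mathrm{CE}}(g)(\mathbf{x}_J) = \mathbb{E}[{}_J\Delta_{\mathrm{Rep}_J}(g)(\mathbf{x}_J,\mathbf{X}_{\setminus J})\mid \mathbf{X}_J=\mathbf{x}_J]$ and carefully distinguish the fact that the difference operator ${}_J\Delta_{\mathrm{Rep}_J}$ acts on the $\mathbf{x}_J$-slot of $g$, not on the conditioning variable. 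The argument mirrors the PD-based proper proof, with $\mathbb{E}_{\mathbf{X}_{\setminus J}}$ replaced by a conditional expectation of a finite difference.

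For (P2)*, I would exploit the extended meaning of $\frac{\partial g}{\partial \mathbf{x}_J}=0$: it asserts ${}_J\Delta_{\mathbf{a}_J}(g)=0$ identically for every admissible value $\mathbf{a}_J$. Taking $\mathbf{a}_J=\mathrm{Rep}_J$ gives ${}_J\Delta_{\mathrm{Rep}_J}(g)=0$, whence $\mathcal{L}_{J,\mathrm{CE}}(g)=\mathbb{E}[0\mid\cdots]=0$, which trivially has $\frac{\partial}{\partial\mathbf{x}_J}=0$. For (P3)*, when $g\in V_J$ the integrand ${}_J\Delta_{\mathrm{Rep}_J}(g)(\mathbf{x}_J,\mathbf{X}_{\setminus J})$ does not depend on $\mathbf{X}_{\setminus J}$, so the conditional expectation collapses and $\mathcal{L}_{J,\mathrm{CE}}(g)={}_J\Delta_{\mathrm{Rep}_J}(g)$. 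Using the expansion ${}_J\Delta_{\mathrm{Rep}_J}(g)=\sum_{J'\subseteq J}(-1)^{|J'|}g|_{\mathbf{x}_{J'}=\mathrm{Rep}_{J'}}$ and peeling off the $J'=\emptyset$ term (which is $g$ itself), I would obtain $\mathcal{L}_{J,\mathrm{CE}}(g)-g=\sum_{\emptyset\neq J'\subseteq J}(-1)^{|J'|}g|_{\mathbf{x}_{J'}=\mathrm{Rep}_{J'}}$. Each summand is invariant to the variables indexed by the nonempty set $J'\subseteq J$, hence lies in some $V_{J''}$ with $J''\subsetneq J$; by Remark~\ref{rem_partial} the sum has vanishing $\frac{\partial}{\partial\mathbf{x}_J}$, giving (P3)*.

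The main work is (P6)*. Under independence, $\mathbf{X}_{\setminus J}$ is independent of $\mathbf{X}_J$, so conditioning on $\mathbf{X}_J=\mathbf{x}_J$ reduces the conditional expectation to the marginal one, $\mathcal{L}_{J,\mathrm{CE}}(g)(\mathbf{x}_J)=\mathbb{E}_{\mathbf{X}_{\setminus J}}[{}_J\Delta_{\mathrm{Rep}_J}(g)(\mathbf{x}_J,\mathbf{X}_{\setminus J})]$. Expanding the difference by linearity of $\mathbb{E}_{\mathbf{X}_{\setminus J}}$ and again splitting off the $J'=\emptyset$ term shows that this $\emptyset$-term is exactly $\mathbb{E}_{\mathbf{X}_{\setminus J}}(g)$, so $\mathcal{L}_{J,\mathrm{CE}}(g)-\mathbb{E}_{X_{\setminus J}}(g)=\sum_{\emptyset\neq J'\subseteq J}(-1)^{|J'|}\mathbb{E}_{\mathbf{X}_{\setminus J}}[g|_{\mathbf{x}_{J'}=\mathrm{Rep}_{J'}}]$. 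As in (P3)*, each surviving summand is invariant to the variables indexed by $J'$—fixing them at $\mathrm{Rep}_{J'}$ and then integrating out $\mathbf{X}_{\setminus J}$ leaves a function not depending on $\mathbf{x}_{J'}$—so $\frac{\partial}{\partial\mathbf{x}_J}$ annihilates the whole residual, yielding (P6)*. The step I expect to be the main obstacle is precisely this bookkeeping in (P6)*: separating the slot on which ${}_J\Delta_{\mathrm{Rep}_J}$ operates from the conditioning variable, justifying the reduction of the conditional expectation to $\mathbb{E}_{\mathbf{X}_{\setminus J}}$ under independence, and confirming that every nonempty-$J'$ residual term is genuinely killed by $\frac{\partial}{\partial\mathbf{x}_J}$.
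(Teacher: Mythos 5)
Your proposal is correct and follows essentially the same route as the paper: verify that $\left\{\mathcal{L}_{J,\mathrm{CE}}\right\}_{\emptyset\neq J\subseteq D}$ satisfies (P2)*, (P3)*, and (P6)* and then invoke the theorem, using ${}_J\Delta_{\mathrm{Rep}_J}(g)=0$ for (P2)*, the collapse of the conditional expectation for $g\in V_J$ for (P3)*, and the reduction to $\mathbb{E}_{\mathbf{X}_{\setminus J}}$ under independence for (P6)*. The only difference is cosmetic: where the paper writes ``$g + \mathrm{other\,terms}$'' and notes that $\frac{\partial}{\partial\mathbf{x}_J}$ annihilates the other terms, you expand ${}_J\Delta_{\mathrm{Rep}_J}$ explicitly as $\sum_{J'\subseteq J}(-1)^{|J'|}g|_{\mathbf{x}_{J'}=\mathrm{Rep}_{J'}}$ and check term by term, which is the same argument made more explicit.
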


\begin{proof}
$\left\{ \mathcal{L}_{J,\mathrm{CE}}\right\} _{\emptyset \neq J\subseteq D}$ satisfies all assumptions of the theorem, namely, (P2)*, (P3)*, and (P6)*.
In fact, if $\frac{\partial g}{\partial \mathbf{x}_J}=0$, then ${}_J\Delta _{\mathrm{Rep}_J }\left( g\right) =0$, so 
\[\mathcal{L}_{J,\mathrm{CE}}\left( g\right) \left( \mathbf{x}_J\right) =\mathbb{E}\left[ {}_J\Delta _{\mathrm{Rep}_J }\left( g\right) \left( \mathbf{x}_J,\mathbf{X}_{\setminus J}\right) \, \middle| \, \mathbf{X}_J=\mathbf{x}_J\right] =0,\] 
satisfying (P2)*.
If $g\in V_J$, then
\begin{align*}
\mathcal{L}_{J,\mathrm{CE}}\left( g\right) \left( \mathbf{x}_J\right)
& =\mathbb{E}\left[ {}_J\Delta _{\mathrm{Rep}_J }\left( g\right) \left( \mathbf{x}_J,\mathbf{X}_{\setminus J}\right) \, \middle| \, \mathbf{X}_J =\mathbf{x}_J\right]\\
& ={}_J\Delta _{\mathrm{Rep}_J }\left( g\right) \left( \mathbf{x}_D\right)
 =g\left( \mathbf{x}_D\right) + \mathrm{other\, terms}
\end{align*}
holds. If $\frac{\partial}{\partial \mathbf{x}_J}$ is applied to other terms, the result is $0$ and (P3)* holds.
When the feature variables are independent, we have
\begin{align*}
\mathcal{L}_{J,\mathrm{CE}}\left( g\right) \left( \mathbf{x}_J\right) & =\mathbb{E}\left[ {}_J\Delta _{\mathrm{Rep}_J }\left( g\right) \left( \mathbf{x}_J,\mathbf{X}_{\setminus J}\right) \, \middle| \, \mathbf{X}_J=\mathbf{x}_J\right] \\
& =\mathbb{E}\left[ {}_J\Delta _{\mathrm{Rep}_J }\left( g\right) \left( \mathbf{x}_J,\mathbf{X}_{\setminus J}\right) \right] \\
& =\mathbb{E}\left[ g\left( \mathbf{x}_J,\mathbf{X}_{\setminus J}\right) \right] + \mathrm{other\, terms}.
\end{align*}
Because $\frac{\partial}{\partial \mathbf{x}_J}(\mathrm{other\, terms})=0$, 
$\left\{ \mathcal{L}_{J,\mathrm{CE}}\right\} _{\emptyset \neq J\subseteq D}$ satisfies (P6)*.
Therefore, under the proposed theorem, CE-based is an ID.
\end{proof}

\subsection{RP-based Decomposition}\label{subsection_RP-based}
Using the proposed theorem, we can devise a functional decomposition method that fulfils most but not all ID requirements; specifically, a method that satisfies (P1)--(P5). As mentioned in \ref{subsection_ID requirements}, ID requirement (P6) might be too stringent and alternatives that violate this requirement should be explored. To this end, we can simply find a set $\left\{ \mathcal{L}_J\right\} _{\emptyset \neq J\subseteq D}$ that satisfies (P2)* and (P3)* of the theorem. An example is given below.

Using $\mathrm{Rep}_{\setminus J}$ introduced in \ref{subsection_CE-based}, we first define 
\[ \mathrm{RP}_J\left( \mathbf{x}_J\right) :=f(\mathbf{x}_J,\mathrm{Rep}_{\setminus J})=f|_{\mathbf{x}_{\setminus J}=\mathrm{Rep}_{\setminus J}} (\mathbf{x}_J ). \]
For an intuitive description, we note that whereas $\mathrm{PD}_J\left( \mathbf{x}_J\right)$ takes the average of $f\left( \mathbf{x}_J,\mathbf{X}_{\setminus J}\right)$, the $\mathrm{RP}_J\left( \mathbf{x}_J\right)$ simply takes a representative value of $f\left( \mathbf{x}_J,\mathbf{X}_{\setminus J}\right)$. It can reveal aspects that could be lost through averaging.

\begin{rem}
The $\mathrm{RP}_J\left( \mathbf{x}_J\right)$ might be more clearly understood as a representative of Individual Conditional Expectation (ICE) \cite{goldstein2015peeking}. However, as instances in which all feature variable values match the representative values are not guaranteed, the term `representative' is not necessarily appropriate in the context of ICE.
\end{rem}

Based on $\mathrm{RP}_J\left( \mathbf{x}_J\right)$, we recursively define a functional decomposition $\left\{ f_{J,\mathrm{RP}}\right\} _{J\subseteq D}$ similarly to PD-based proper as follows:
\[\mathcal{L}_{J,\mathrm{RP}}\left( g\right) := g|_{\mathbf{x}_{\setminus J}=\mathrm{Rep}_{\setminus J}},
\quad\mathcal{H}_{\emptyset ,\mathrm{RP}}:=\mathbb{E},\]
\[\mathcal{H}_{J,\mathrm{RP}}:=\left( I-\sum _{J^{\prime}\subset J}\mathcal{H}_{J^{\prime},\mathrm{RP}} \right) \circ \mathcal{L}_{J,\mathrm{RP}},
\quad f_{J,\mathrm{RP}}:=\mathcal{H}_{J,\mathrm{RP}} (f).\]
In particular, we have
\[f_{j,\mathrm{RP}}\left( x_j\right) =\mathrm{RP}_j (x_j )-\mathbb{E}[\mathrm{RP}_j (X_j )],\]
\begin{align*}
f_{\left\{ j,l\right\} ,\mathrm{RP}}\left( x_j,x_l\right) =\mathrm{RP}_{\{ j,l\} } (x_j,x_l )-\mathrm{RP}_j (x_j )-\mathrm{RP}_l (x_l )-c_{jl},
\end{align*}
where $c_{jl}$ is a constant defined such that $\mathbb{E}\left[ f_{\left\{ j,l\right\} ,\mathrm{RP}}\left( X_j, X_l\right) \right] =0$. 

We refer to the method generating this functional decomposition as the RP-based decomposition, or simply ``RP-based'' when the context is clear.

The following proposition holds:

\begin{prop}
RP-based satisfies ID requirements (P1)--(P5).
\end{prop}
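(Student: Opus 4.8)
The plan is to invoke the Theorem. Since the recursive construction $\{\mathcal{H}_{J,\mathrm{RP}}\}_{J\subseteq D}$ is exactly of the form (D) with generating operators $\mathcal{L}_{J,\mathrm{RP}}(g)=g|_{\mathbf{x}_{\setminus J}=\mathrm{Rep}_{\setminus J}}$, it suffices to verify that $\{\mathcal{L}_{J,\mathrm{RP}}\}_{\emptyset\neq J\subseteq D}$ satisfies (P2)* and (P3)*. The first part of the Theorem then immediately yields (P1)--(P5). I would not attempt (P6)*, since RP-based is deliberately designed to violate (P6), and the proposition only claims (P1)--(P5).

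First I would dispatch (P3)*, which is essentially immediate. If $g\in V_J$, then $g$ does not depend on the variables $\mathbf{x}_{\setminus J}$, so substituting representative values into those coordinates does nothing: $\mathcal{L}_{J,\mathrm{RP}}(g)=g|_{\mathbf{x}_{\setminus J}=\mathrm{Rep}_{\setminus J}}=g$. Hence $\mathcal{L}_{J,\mathrm{RP}}(g)-g=0$ identically, and a fortiori $\frac{\partial}{\partial\mathbf{x}_J}\left(\mathcal{L}_{J,\mathrm{RP}}(g)-g\right)=0$, establishing (P3)*.

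The substantive step is (P2)*, for which I would use the extended meaning of $\frac{\partial g}{\partial\mathbf{x}_J}=0$ from Section \ref{subsubsection_notation_partial}, namely that ${}_J\Delta_{\mathbf{a}_J}(g)=0$ for every admissible value $\mathbf{a}_J$. The key observation is that the substitution operator $\mathcal{L}_{J,\mathrm{RP}}$, which fixes the coordinates indexed by $\setminus J$, commutes with the partial difference operator ${}_J\Delta_{\mathbf{a}_J}$, which acts only on coordinates indexed by $J$:
\[ {}_J\Delta_{\mathbf{a}_J}\bigl(\mathcal{L}_{J,\mathrm{RP}}(g)\bigr) = \bigl({}_J\Delta_{\mathbf{a}_J}(g)\bigr)\big|_{\mathbf{x}_{\setminus J}=\mathrm{Rep}_{\setminus J}}. \]
Assuming $\frac{\partial g}{\partial\mathbf{x}_J}=0$, the inner term ${}_J\Delta_{\mathbf{a}_J}(g)$ is identically $0$, so its restriction is $0$ as well; thus ${}_J\Delta_{\mathbf{a}_J}\bigl(\mathcal{L}_{J,\mathrm{RP}}(g)\bigr)=0$ for every $\mathbf{a}_J$, i.e. $\frac{\partial\mathcal{L}_{J,\mathrm{RP}}(g)}{\partial\mathbf{x}_J}=0$, which is precisely (P2)*.

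The only point requiring care is the commutation identity displayed above, and this is where I expect the main (though still modest) bookkeeping to lie. I would justify it by expanding ${}_J\Delta_{\mathbf{a}_J}$ as $\sum_{J'\subseteq J}(-1)^{|J'|}(\cdot)|_{\mathbf{x}_{J'}=\mathbf{a}_{J'}}$ and noting that, for each $J'\subseteq J$, fixing $\mathbf{x}_{J'}$ and fixing $\mathbf{x}_{\setminus J}$ substitute values into \emph{disjoint} sets of coordinates, hence commute term by term; summing recovers the identity. With (P2)* and (P3)* verified, the first part of the Theorem delivers (P1)--(P5), completing the proof. No genuine obstacle arises: RP-based is the simplest instance of construction (D), and the argument is purely a matter of separating the $J$- and $\setminus J$-coordinate actions.
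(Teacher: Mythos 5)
Your proposal is correct and follows exactly the paper's route: the paper's own proof simply states that $\{\mathcal{L}_{J,\mathrm{RP}}\}_{\emptyset\neq J\subseteq D}$ ``clearly'' satisfies (P2)* and (P3)* by definition and then invokes the first part of the theorem to obtain (P1)--(P5). Your write-up merely fills in the routine verification (the substitution/partial-difference commutation for (P2)* and the identity $\mathcal{L}_{J,\mathrm{RP}}(g)=g$ for $g\in V_J$) that the paper leaves implicit, and both checks are sound.
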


\begin{proof}  
By definition, RP-based clearly satisfies assumptions (P2)* and (P3)* of the theorem. It follows from the theorem that RP-based also fulfils ID requirements (P1)--(P5).
\end{proof}

\subsection{Hybrid Decompositions}\label{subsection_hybrid}
When constructing an ID or a decomposition method similar to an ID (such as RP-based) using the proposed theorem, the elements of $\left\{ \mathcal{L}_J\right\} _{\emptyset \neq J\subseteq D}$ can be of different kinds. For example, if ALE exhibits superior aspects for one-dimensional terms and RP-based has superior aspects for calculations in two or more dimensions, the two types of elements can be conceivably combined.

Under the same assumptions as ALE in Section \ref{subsection_ALE}, a hybrid functional decomposition $\left\{ f_{J,\mathrm{ALE}+\mathrm{RP}}\right\} _{J\subseteq D}$ can be recursively defined as follows:
\[\mathcal{H}_{\emptyset,\mathrm{ALE}+\mathrm{RP}}:=\mathbb{E}, 
\quad\mathcal{H}_{j,\mathrm{ALE}+\mathrm{RP}}:=\mathcal{H}_{j,\mathrm{ALE}}.\]
When $\left| J\right| \in \left\{ 2,\ldots,d\right\}$, we have
\[\mathcal{L}_{J,\mathrm{ALE}+\mathrm{RP}}(g):= g|_{\mathbf{x}_{\setminus J}=\mathrm{Rep}_{\setminus J}},\]
\[\mathcal{H}_{J,\mathrm{ALE}+\mathrm{RP}}:=\left( I-\sum _{J^{\prime}\subset J}\mathcal{H}_{J^{\prime},\mathrm{ALE}+\mathrm{RP}}\right) \circ \mathcal{L}_{J,\mathrm{ALE}+\mathrm{RP}},\]
\[ f_{J,\mathrm{ALE}+\mathrm{RP}}:=\mathcal{H}_{J,\mathrm{ALE}+\mathrm{RP}} (f).\]
In particular, we have
\[f_{j,\mathrm{ALE}+\mathrm{RP}}\left( x_j\right) =f_{j,\mathrm{ALE}} (x_j )=\int _{x_{\mathrm{min},j}}^{x_j}\mathbb{E}\left[ {\partial f}{\partial x_j} \, \middle| \, X_j=z_j \right] dz_j-c_j,\]
\begin{align*}
f_{\left\{ j,l\right\} ,\mathrm{ALE}+\mathrm{RP}}\left( x_j,x_l\right) =
&\mathrm{RP}_{\{ j,l\} } (x_j,x_l )-\int _{x_{\mathrm{min},j}}^{x_j}\mathbb{E}\left[ \frac{\partial \mathrm{RP}_{\{ j,l\} }}{\partial x_j} \, \middle| \, X_j=z_j \right] dz_j\\
&-\int _{x_{\mathrm{min},l}}^{x_l}\mathbb{E}\left[ \frac{\partial \mathrm{RP}_{\{ j,l\} }}{\partial x_l} \, \middle| \, X_l=z_l \right] dz_l-c_{jl},
\end{align*}
where $c_j$ and $c_{jl}$ are constants defined such that 
\[ \mathbb{E}\left[ f_{j,\mathrm{ALE}+\mathrm{RP}}\left( X_j\right) \right] =0 \text{ and } \mathbb{E}\left[ f_{\left\{ j,l\right\} ,\mathrm{ALE}+\mathrm{RP}}\left( X_j,X_l\right) \right] =0,\] 
respectively.

The following proposition holds:
\begin{prop}
The above-defined hybrid decomposition method satisfies ID requirements (P1)--(P5).
\end{prop}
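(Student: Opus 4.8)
The plan is to exhibit the hybrid method as an instance of the construction (D) applied to a suitably mixed family $\{\mathcal{L}_J\}_{\emptyset\neq J\subseteq D}$, and then invoke the first part of the theorem. Concretely, I would define
\[
\mathcal{L}_{J,\mathrm{ALE}+\mathrm{RP}} := \mathcal{L}_{J,\mathrm{ALE}} \quad (|J|=1), \qquad \mathcal{L}_{J,\mathrm{ALE}+\mathrm{RP}} := \mathcal{L}_{J,\mathrm{RP}} \quad (|J|\geq 2),
\]
so that each index $J$ is assigned either the ALE operator or the RP operator according to its cardinality.

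First I would check that the hybrid operators $\{\mathcal{H}_{J,\mathrm{ALE}+\mathrm{RP}}\}$ coincide with those produced by (D) from this mixed family. For $|J|\geq 2$ this is immediate, since the hybrid definition is literally (D) with $\mathcal{L}_{J,\mathrm{RP}}$. For the singleton case the hybrid sets $\mathcal{H}_{j,\mathrm{ALE}+\mathrm{RP}}:=\mathcal{H}_{j,\mathrm{ALE}}$; but since $\mathcal{H}_{\emptyset,\mathrm{ALE}+\mathrm{RP}}=\mathbb{E}=\mathcal{H}_{\emptyset,\mathrm{ALE}}$, applying (D) yields $(I-\mathbb{E})\circ\mathcal{L}_{j,\mathrm{ALE}}$, which is exactly $\mathcal{H}_{j,\mathrm{ALE}}$. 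Hence the direct definition agrees with (D), and the whole hybrid family is a genuine output of the construction.

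The crucial observation is that both hypotheses of the theorem, (P2)* and (P3)*, are conditions on each $\mathcal{L}_J$ \emph{separately}: for a fixed $J$ they constrain only the single operator $\mathcal{L}_J$ and impose no compatibility between the operators attached to different index sets. The proof that ALE is an ID already established that $\mathcal{L}_{J,\mathrm{ALE}}$ satisfies (P2)* and (P3)* for every $J$, and the construction of RP-based established the same for $\mathcal{L}_{J,\mathrm{RP}}$. Therefore, whichever of the two operators is assigned to a given $J$, that operator satisfies (P2)* and (P3)* for that $J$; consequently the entire mixed family $\{\mathcal{L}_{J,\mathrm{ALE}+\mathrm{RP}}\}_{\emptyset\neq J\subseteq D}$ satisfies (P2)* and (P3)*. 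The first part of the theorem then delivers at once that $\{\mathcal{H}_{J,\mathrm{ALE}+\mathrm{RP}}\}_{J\subseteq D}$ satisfies the ID requirements (P1)--(P5).

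I do not expect any genuine analytic obstacle here; the only point deserving care is the one just emphasised---the separability of the theorem's hypotheses across $J$, which is precisely what licenses freely mixing different constructions index-by-index. The argument is thus essentially bookkeeping, and the same reasoning would evidently cover any hybrid assembled from operators each of which is already known to satisfy (P2)* and (P3)*.
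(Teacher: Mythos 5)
Your proposal is correct and follows essentially the same route as the paper: observe that each $\mathcal{L}_{J,\mathrm{ALE}+\mathrm{RP}}$ (ALE for singletons, RP for $|J|\geq 2$) satisfies (P2)* and (P3)*, then invoke the first part of the theorem to obtain (P1)--(P5). Your explicit check that the directly-defined singleton operators $\mathcal{H}_{j,\mathrm{ALE}}$ agree with what (D) produces, and your remark that the theorem's hypotheses constrain each $\mathcal{L}_J$ separately, merely spell out details the paper's terse proof treats as clear by definition.
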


\begin{proof}   
By definition, the method clearly satisfies assumptions (P2)* and (P3)* of the theorem. From the theorem, it follows that the method also fulfils ID requirements (P1)--(P5).
\end{proof}

\section{Proof}\label{section_proof}
This section provides a proof of Proposition \ref{prop_main}, which includes items (1) through (7).

To prove this proposition, we define $\left\{ \mathcal{H}_J\right\} _{J\subseteq D}$ as (D) and introduce linear operators and a lemma. These preliminary definitions will be repeatedly used in the following discussion.

We first define the following set of linear operators $\left\{ \mathcal{K}_J:V_D\rightarrow V_J\right\} _{\emptyset \neq J\subseteq D}$:
\[
\mathcal{K}_j:=\mathcal{L}_j, \quad \mathcal{K}_J:=\left( I-\sum _{\emptyset \neq J^{\prime}\subset J}\mathcal{K}_{J^{\prime}} \right) \circ \mathcal{L}_J.
\]
By induction, we can show that $\mathcal{H}_J=\left( I-\mathbb{E}\right) \circ \mathcal{K}_J$ for all $J\neq\emptyset$.

\begin{lem}
For $\emptyset \neq J \subseteq D$, suppose that the following holds for any $J^{\prime} \subset J$: If $\frac{\partial g}{\partial \mathbf{x}_{J^{\prime}}}=0$, then $\mathcal{H}_{J^{\prime}}\left( g\right) =0$ (corresponding to (P2)), and if $g\in V_{J^{\prime}}$, then $\left( \sum_{J^{\prime \prime}\subseteq J^{\prime}}\mathcal{H}_{J^{\prime \prime}}\right) \left( g\right) =g$ (corresponding to (P3)). Then, for any $h\in V_J$ with $\frac{\partial h}{\partial \mathbf{x}_J}=0$, it follows that $\left( \sum_{J^{\prime}\subset J}\mathcal{H}_{J^{\prime}}\right) \left( h\right) =h$.
\end{lem}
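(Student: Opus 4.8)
The plan is to exploit the structural decomposition guaranteed by Remark~\ref{rem_partial}. Since $h\in V_J$ satisfies $\frac{\partial h}{\partial \mathbf{x}_J}=0$, Remark~\ref{rem_partial} tells me that $h$ can be written as a finite sum $h=\sum_{K\subset J}h_K$ with each $h_K\in V_K$. By linearity of the operators $\mathcal{H}_{J'}$, it then suffices to verify that $\left(\sum_{J'\subset J}\mathcal{H}_{J'}\right)(h_K)=h_K$ for every individual summand $h_K$, after which summing over $K\subset J$ recovers $h$.

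For a fixed $K\subset J$, I would partition the index set $\{J'\mid J'\subset J\}$ into those $J'$ with $J'\subseteq K$ and those with $J'\not\subseteq K$. Because $K$ is a proper subset of $J$, every $J'\subseteq K$ is automatically a proper subset of $J$, so the first group is exactly $\{J'\mid J'\subseteq K\}$. Applying the hypothesis corresponding to (P3) (which holds for the proper subset $K\subset J$ and the function $h_K\in V_K$) yields $\left(\sum_{J'\subseteq K}\mathcal{H}_{J'}\right)(h_K)=h_K$.

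The remaining terms vanish: if $J'\subset J$ but $J'\not\subseteq K$, then $J'$ contains some index $m\notin K$. Since $h_K\in V_K$ does not depend on $x_m$, we have $\frac{\partial h_K}{\partial \mathbf{x}_{J'}}=0$, so the hypothesis corresponding to (P2) forces $\mathcal{H}_{J'}(h_K)=0$. Combining the two groups gives $\left(\sum_{J'\subset J}\mathcal{H}_{J'}\right)(h_K)=h_K$, and summing over $K\subset J$ completes the argument.

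The step requiring the most care is the bookkeeping in the partition: I must confirm that $\{J'\mid J'\subseteq K\}$ and $\{J'\mid J'\subset J,\ J'\not\subseteq K\}$ together exhaust $\{J'\mid J'\subset J\}$ without overlap, and in particular that $K\subsetneq J$ ensures every subset of $K$ remains a proper subset of $J$ so that the (P3)-type hypothesis is genuinely applicable. The invocation of Remark~\ref{rem_partial} to obtain the decomposition $h=\sum_{K\subset J}h_K$ is the other delicate point, since everything downstream hinges on having each summand lie in a single $V_K$.
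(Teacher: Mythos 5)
Your proposal is correct and follows essentially the same route as the paper's own proof: decompose $h=\sum_{K\subset J}h_K$ via Remark~\ref{rem_partial}, apply the (P3)-type hypothesis to each $h_K\in V_K$ (with $K$ a proper subset of $J$), kill the terms with $J^{\prime}\not\subseteq K$ using the (P2)-type hypothesis, and sum over $K$. The bookkeeping points you flag (the partition of $\{J^{\prime}\mid J^{\prime}\subset J\}$ and the fact that $J^{\prime}\subseteq K\subsetneq J$ implies $J^{\prime}\subsetneq J$, so both hypotheses apply) are exactly the implicit steps in the paper's argument, and you have handled them correctly.
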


\begin{proof}
As $h\in V_J$ and $\frac{\partial h}{\partial \mathbf{x}_J}=0$, it follows from the definition of the partial differentiation symbol that $h$ can be decomposed as $h=\sum_{J^{\prime \prime}\subset J}h_{J^{\prime \prime}}$ using some $h_{J^{\prime \prime}}\in V_{J^{\prime \prime}}$ (see Remark \ref{rem_partial}). As $h_{J^{\prime \prime}}\in V_{J^{\prime \prime}}$, $\left( \sum_{J^{\prime}\subseteq J^{\prime \prime}}\mathcal{H}_{J^{\prime}}\right) \left( h_{J^{\prime \prime}}\right) =h_{J^{\prime \prime}}$ holds. In addition, if $J^{\prime}\subseteq J^{\prime \prime}$ does not hold, then $\frac{\partial h_{J^{\prime \prime}}}{\partial \mathbf{x}_{J^{\prime}}}=0$ and hence $\mathcal{H}_{J^{\prime}}\left( h_{J^{\prime \prime}}\right) =0$. Thus, $\left( \sum_{J^{\prime}\subset J}\mathcal{H}_{J^{\prime}}\right) \left( h_{J^{\prime \prime}}\right) =\left( \sum_{J^{\prime}\subseteq J^{\prime \prime}}\mathcal{H}_{J^{\prime}}\right) \left( h_{J^{\prime \prime}}\right) =h_{J^{\prime \prime}}$ and
\begin{align*}
\left( \sum_{J^{\prime}\subset J}\mathcal{H}_{J^{\prime}}\right) \left( h\right)
& =\left( \sum_{J^{\prime}\subset J}\mathcal{H}_{J^{\prime}}\right) \left( \sum_{J^{\prime \prime}\subset J} h_{J^{\prime \prime}}\right) 
=\sum_{J^{\prime \prime}\subset J}\left( \sum_{J^{\prime}\subset J}\mathcal{H}_{J^{\prime}}\right) \left( h_{J^{\prime \prime}}\right)\\
& =\sum_{J^{\prime \prime}\subset J} h_{J^{\prime \prime}}=h.
\end{align*}
\end{proof}

\begin{enumerate}
\renewcommand{\labelenumi}{(\arabic{enumi}) }
    \item From the definition of $\mathcal{H}_\emptyset$ and noting that $\mathcal{H}_J = (I - \mathbb{E}) \circ \mathcal{K}_J$ for $J \neq \emptyset$, we observe that (P1) is always satisfied.

    \item The proof of  (P2)* and (P3)* $\Rightarrow$ (P2) and (P3) can be obtained by induction.

    First, when $J=\{ j\}$, assume that $\frac{\partial g}{\partial x_j}=0$. From (P2)*, it follows that $\frac{\partial \mathcal{L}_j(g)}{\partial x_j}=0$, implying that $\mathcal{L}_j\left( g\right) =\mathrm{const}$. Consequently, $\mathcal{H}_j\left( g\right) =\left( I-\mathbb{E}\right) \circ \mathcal{L}_j\left( g\right) =0$, so (P2) is satisfied. For $g\in V_j$, $\frac{\partial}{\partial x_j}\left( \mathcal{L}_j\left( g\right) -g\right) =0$ by (P3)*, leading to $\mathcal{L}_j\left( g\right) -g=\mathrm{const}$ and hence to $0=\left( I-\mathbb{E}\right) \circ \left( \mathcal{L}_j\left( g\right) -g\right) =\mathcal{H}_j\left( g\right) -g+\mathbb{E}(g)$. Therefore, (P3) is satisfied.

    Now consider the case $\left| J\right| >1$ and assume that for any $J^{\prime}\subset J$, $\mathcal{H}_{J^{\prime}}\left( g\right) =0$ if $\frac{\partial g}{\partial \mathbf{x}_{J^{\prime}}}=0$ (corresponding to (P2)) and $\left( \sum_{J^{\prime \prime}\subseteq J^{\prime}}\mathcal{H}_{J^{\prime \prime}}\right) \left( g\right) =g$ if $g\in V_{J^{\prime}}$ (corresponding to(P3)).    
    If $\frac{\partial g}{\partial \mathbf{x}_J}=0$, (P2)* implies that $\frac{\partial \mathcal{L}_J\left( g\right) }{\partial \mathbf{x}_J}=0$. From the lemma, it follows that $\left( \sum_{J^{\prime}\subset J}\mathcal{H}_{J^{\prime}}\right) \circ \mathcal{L}_J(g)=\mathcal{L}_J(g)$. Therefore, $\mathcal{H}_J(g)=\left( I-\sum_{J^{\prime}\subset J}\mathcal{H}_{J^{\prime}}\right) \circ \mathcal{L}_J(g)=\mathcal{L}_J\left( g\right) -\sum_{J^{\prime}\subset J}\mathcal{H}_{J^{\prime}}\circ \mathcal{L}_J(g)=0$, satisfying (P2). For $g\in V_J$, (P3)* implies that $\frac{\partial}{\partial \mathbf{x}_J}\left( \mathcal{L}_J\left( g\right) -g\right) =0$. Defining $h:=\mathcal{L}_J\left( g\right) -g$, it follows from the lemma that $\left( \sum_{J^{\prime}\subset J}\mathcal{H}_{J^{\prime}}\right) \left(h\right) =h$. Therefore, we have
    \begin{align*}
        \sum_{J^{\prime}\subseteq J}\mathcal{H}_{J^{\prime}}\left( g\right) & =\sum_{J^{\prime}\subset J}\mathcal{H}_{J^{\prime}}\left( g\right) +\mathcal{H}_J\left( g\right)\\
        & =\sum_{J^{\prime}\subset J}\mathcal{H}_{J^{\prime}}\left( g\right) +\left( I-\sum_{J^{\prime}\subset J}\mathcal{H}_{J^{\prime}}\right) \circ \mathcal{L}_J\left( g\right) \\
        & =\sum_{J^{\prime}\subset J}\mathcal{H}_{J^{\prime}}\left( g\right) +\left( I-\sum_{J^{\prime}\subset J}\mathcal{H}_{J^{\prime}}\right) \left( g+h\right)\\
& =g+\left( I-\sum_{J^{\prime}\subset J}\mathcal{H}_{J^{\prime}}\right) (h)=g,
    \end{align*}
so (P3) is satisfied.

    \item To prove that (P2) $\Rightarrow$ (P2)*, we assume that $\frac{\partial g}{\partial \mathbf{x}_J}=0$. From (P2), we have $0=\mathcal{H}_J\left( g\right) =\left( I-\mathbb{E}\right) \circ \mathcal{K}_J\left( g\right)$, from which $\mathcal{K}_J(g)=\mathrm{const}$ follows. Therefore, from the definition of $\mathcal{K}_J$,
    \[\mathrm{const}=\mathcal{K}_J\left( g\right) =\left( I-\sum_{\emptyset\neq J^{\prime}\subset J}\mathcal{K}_{J^{\prime}}\right) \circ \mathcal{L}_J\left( g\right) =\mathcal{L}_J\left( g\right) -\sum_{\emptyset \neq J^{\prime}\subset J}\mathcal{K}_{J^{\prime}}\circ \mathcal{L}_J\left( g\right).\]
Therefore,	
$\mathcal{L}_J\left( g\right) = \sum_{\emptyset \neq J^{\prime}\subset J}\mathcal{K}_{J^{\prime}}\circ \mathcal{L}_J\left( g\right) + \mathrm{const}$.
Because all terms in the summation on the right-hand side belong to $V_{J'}$ with $J' \subset J$, we have $\frac{\partial \mathcal{L}_J\left( g\right)}{\partial \mathbf{x}_J} = 0$.

    \item To prove that (P3) $\Rightarrow$ (P3)*, we assume $g\in V_J$. From (P3), we then have 
    \begin{align*}
        0&=g-\sum_{J^{\prime}\subseteq J}\mathcal{H}_{J^{\prime}}\left( g\right) 
        =\left( I-\mathbb{E}-\sum_{\emptyset \neq J^{\prime}\subseteq J}\left( I-\mathbb{E}\right) \circ \mathcal{K}_{J^{\prime}}\right) \left( g\right)\\
        & =\left( I-\mathbb{E}\right) \circ \left( I-\sum_{\emptyset\neq J^{\prime}\subseteq J}\mathcal{K}_{J^{\prime}}\right) (g).
    \end{align*}
    Therefore, $\left( I-\sum_{\emptyset \neq J^{\prime}\subseteq J}\mathcal{K}_{J^{\prime}}\right) \left( g\right) =g-\sum_{\emptyset \neq J^{\prime}\subseteq J}\mathcal{K}_{J^{\prime}} \left( g\right)=\mathrm{const}$. We then have
    \begin{align*}
    g+\mathrm{const} & =\sum_{\emptyset \neq J^{\prime}\subseteq J}\mathcal{K}_{J^{\prime}}\left( g\right) =\sum_{\emptyset \neq J^{\prime}\subset J}\mathcal{K}_{J^{\prime}}\left( g\right) +\mathcal{K}_J\left( g\right) \\
    & =\sum_{\emptyset \neq J^{\prime}\subset J}\mathcal{K}_{J^{\prime}}\left( g\right) +\left( I-\sum_{\emptyset \neq J^{\prime}\subset J}\mathcal{K}_{J^{\prime}}\right) \circ \mathcal{L}_J\left( g\right)\\
    & =\mathcal{L}_J\left( g\right) +\sum_{\emptyset \neq J^{\prime}\subset J}\mathcal{K}_{J^{\prime}}\circ \left(I-\mathcal{L}_J\left( g\right)\right).
    \end{align*}
    Therefore,	
$\mathcal{L}_J\left( g\right) - g = \sum_{\emptyset \neq J^{\prime}\subset J}\mathcal{K}_{J^{\prime}}\circ \left(I-\mathcal{L}_J\left( g\right)\right) + \mathrm{const}$.
Because all terms in the summation on the right-hand side belong to $V_{J'}$ with $J' \subset J$, we have $\frac{\partial}{\partial \mathbf{x}_J}\left( \mathcal{L}_J\left( g\right) -g\right) =0$.
    \item To prove that (P2) and (P3) $\Rightarrow$ (P4) and (P5), we note that
by definition of $\mathcal{H}_J$, $\mathcal{H}_J\circ \mathcal{H}_J=\mathcal{H}_J\circ \left( I-\sum_{J^{\prime}\subset J}\mathcal{H}_{J^{\prime}}\right) \circ \mathcal{L}_J=\mathcal{H}_J\circ \mathcal{L}_J-\sum_{J^{\prime}\subset J}{\mathcal{H}_J\circ \mathcal{H}_{J^{\prime}}\circ \mathcal{L}_J}$.
From (P2), the second term equals $0$, so the expression simplifies to $\mathcal{H}_J\circ \mathcal{H}_J=\mathcal{H}_J\circ \mathcal{L}_J$. Again using the definition of $\mathcal{H}_J$,
it follows that
%\begin{align*}
$\mathcal{H}_J\circ \mathcal{H}_J=\mathcal{H}_J\circ \mathcal{L}_J=\left( I-\sum_{J^{\prime}\subset J}\mathcal{H}_{J^{\prime}}\right) \circ \mathcal{L}_J\circ \mathcal{L}_J$.
%\end{align*}
As (P3)* holds by (4) with (P3), $\frac{\partial}{\partial \mathbf{x}_J}\left( \mathcal{L}_J\circ \mathcal{L}_J(g)-\mathcal{L}_J(g)\right) =0$ for any $g$. Letting $h:=\mathcal{L}_J\circ \mathcal{L}_J (g)-\mathcal{L}_J (g)$ and applying the lemma, we have $\left( \sum_{J^{\prime}\subset J}\mathcal{H}_{J^{\prime}}\right) \left( h\right) =h$. Therefore, for any $g$, we have
\begin{align*}
\mathcal{H}_J\circ \mathcal{H}_J\left( g\right)&=\left( I-\sum_{J^{\prime}\subset J}\mathcal{H}_{J^{\prime}}\right) \circ \mathcal{L}_J\circ \mathcal{L}_J\left( g\right)\\
&=\left( I-\sum_{J^{\prime}\subset J}\mathcal{H}_{J^{\prime}}\right) \left( \mathcal{L}_J\left( g\right) +h\right)\\
&=\mathcal{H}_J\left( g\right) +\left( I-\sum_{J^{\prime}\subset J}\mathcal{H}_{J^{\prime}}\right) \left(h\right) =\mathcal{H}_J\left( g\right),
\end{align*}
by which (P4) holds.

If $J^{\prime}\subseteq J$ does not hold, $\frac{\partial \mathcal{H}_J\left( g\right)}{\partial \mathbf{x}_{J^{\prime}}}=0$. Thus, (P2) implies that $\mathcal{H}_{J^{\prime}}\circ \mathcal{H}_J=0$ irrespective of (D). Therefore, to prove that (P5) holds, it is sufficient to prove the case for $J^{\prime}\subset J$. This case can be proven by induction as follows.

First, we note that $\mathcal{H}_\emptyset \circ \mathcal{H}_j=\mathbb{E}\circ (1-\mathbb{E})\circ \mathcal{L}_j=0$. Next, we assume that for any $J^{\prime}\ne J^{\prime \prime}\subset J$, $\mathcal{H}_{J^{\prime}}\circ \mathcal{H}_{J^{\prime \prime}}=0$. Noting that (P4) holds for any $J^{\prime}\subset J$, we then have
\begin{align*}
\mathcal{H}_{J^{\prime}}\circ \mathcal{H}_J&=\mathcal{H}_{J^{\prime}}\circ \left( I-\sum_{J^{\prime \prime}\subset J}\mathcal{H}_{J^{\prime \prime}}\right) \circ \mathcal{L}_J\\
&=\left( \mathcal{H}_{J^{\prime}}-\mathcal{H}_{J^{\prime}}\circ \sum_{J^{\prime \prime}\subset J}\mathcal{H}_{J^{\prime \prime}}\right) \circ \mathcal{L}_J\\
&=\left( \mathcal{H}_{J^{\prime}}-\mathcal{H}_{J^{\prime}}\circ \mathcal{H}_{J^{\prime}}-\mathcal{H}_{J^{\prime}} \circ \sum _{\substack{J^{\prime \prime}\subset J,\\J^{\prime \prime}\neq J^{\prime} }}\mathcal{H}_{J^{\prime \prime}} \right) \circ \mathcal{L}_J\\
&=(\mathcal{H}_{J^{\prime}}-\mathcal{H}_{J^{\prime}}-0)\circ \mathcal{L}_J=0.
\end{align*}
Thus, (P5) also holds.

\item To prove that (P2) and (P3) and (P6)* $\Rightarrow$ (P6), we
assume that the feature variables are independent and apply the lemma with (P2), (P3), and (P6)*. It follows that for any $g\in V_D$, $\left( I-\sum_{J^{\prime}\subset J}\mathcal{H}_{J^{\prime}}\right)\left(\mathcal{L}_J(g) -\mathbb{E}_{\mathrm{X}_{\setminus J }}(g)\right) =0$, implying that $\left( I-\sum_{J^{\prime}\subset J}\mathcal{H}_{J^{\prime}}\right)\circ \mathcal{L}_J =\left( I-\sum_{J^{\prime}\subset J}\mathcal{H}_{J^{\prime}}\right)\circ \mathbb{E}_{\mathrm{X}_{\setminus J }}$. Therefore,
\begin{align*}
\left( \sum_{J^{\prime}\subseteq J}\mathcal{H}_{J^{\prime}}\right) & =\left( \sum_{J^{\prime}\subset J}\mathcal{H}_{J^{\prime}}\right) +\left( I-\sum_{J^{\prime}\subset J}\mathcal{H}_{J^{\prime}}\right) \circ \mathbb{E}_{X_{\setminus J}}\\
& =\mathbb{E}_{X_{\setminus J}}+\sum_{J^{\prime}\subset J}{\mathcal{H}_{J^{\prime}}\circ}\left( I-\mathbb{E}_{X_{\setminus J}}\right) \\
& =\mathbb{E}_{X_{\setminus J}}+\sum_{J^{\prime}\subset J}{\left( I-\sum_{J^{\prime \prime}\subset J^{\prime}}\mathcal{H}_{J^{\prime \prime}}\right) \circ \mathbb{E}_{X_{\setminus J^{\prime}}}\circ \left( I-\mathbb{E}_{X_{\setminus J}}\right)}\\
& =\mathbb{E}_{X_{\setminus J}}+\sum_{J^{\prime}\subset J}{\left( I-\sum_{J^{\prime \prime}\subset J^{\prime}}\mathcal{H}_{J^{\prime \prime}}\right) \circ \left( \mathbb{E}_{X_{\setminus J^{\prime}}}-\mathbb{E}_{X_{\setminus J^{\prime}}}\right)}\\
& =\mathbb{E}_{X_{\setminus J}}.
\end{align*}
Thus, (P6) holds.

\item Finally, to prove that (P6) $\Rightarrow$ (P6)*,
we assume that the feature variables are independent. From (P6), we then have
\begin{align*}
\mathbb{E}_{X_{\setminus J}}\left( g\right)&=\sum_{J^{\prime}\subseteq J}\mathcal{H}_{J^{\prime}}\left( g\right)\\
&=\sum_{J^{\prime}\subset J}\mathcal{H}_{J^{\prime}}\left( g\right) +\mathcal{H}_J\left( g\right)\\
&=\sum_{J^{\prime}\subset J}\mathcal{H}_{J^{\prime}}\left( g\right) +\left( I-\sum_{J^{\prime}\subset J}\mathcal{H}_{J^{\prime}}\right) \circ \mathcal{L}_J\left( g\right).\\
&=\mathcal{L}_J\left( g\right) +\sum_{J^{\prime}\subset J}{\mathcal{H}_{J^{\prime}}\circ \left( I-\mathcal{L}_J\right) \left( g\right)},
\end{align*}
for any $g$. Therefore,	
$\mathcal{L}_J\left( g\right) + \mathbb{E}_{X_{\setminus J}}\left( g\right) = \sum_{J^{\prime}\subset J}{\mathcal{H}_{J^{\prime}}\circ \left( I-\mathcal{L}_J\right) \left( g\right)}$.
Because all terms in the summation on the right-hand side belong to $V_{J'}$ with $J' \subset J$, we have $\frac{\partial}{\partial \mathbf{x}_J}\left( \mathcal{L}_J\left( g\right) -\mathbb{E}_{X_{\setminus J}}\left( g\right) \right) =0$.

 \end{enumerate}

\section{Conclusion}\label{section_conclusion}
This paper mathematically discussed the foundation of methods that decompose prediction functions into their main and interaction effect terms. In this context, we mathematically formalized the requirements that any functional decomposition method must satisfy. We termed a decomposition method that fulfils these requirements an ID and presented a related fundamental theorem. Using the theorem, we conducted several verifications and introduced new methods. Specifically, we confirmed that ALE is an ID whereas the PD-based naive decomposition, which is implicitly used in calculations such as the H-statistic, is not an ID. In addition, we introduced an ID termed the PD-based proper decomposition as well as other concrete examples of alternative functional decomposition methods that meet ID requirements.

It is hoped that the mathematical foundation presented in this paper will initiate further studies on specific new methods and research that will broaden our understanding of interaction effects.

\section*{Acknowledgments}
We express our gratitude to Ryo Jumonji for his valuable insights into the details of some propositions in the early stages of this paper.

\bibliographystyle{apacite}
\bibliography{IDpaper}

\end{document}